\newtheorem{theorem}{Theorem}
\newtheorem{statement}{Statement}
\newtheorem{lemma}{Lemma}
\newtheorem{remark}{Remark}
\title{A Stochastic Geometry Based Techno-Economic Analysis of RIS-Assisted Cellular Networks}
\author[1,2,3]{Guodong~Sun}
\author[1,2]{Fran\c{c}ois~Baccelli}
\author[3]{Luis~Uzeda~Garcia}
\author[3]{Stefano~Paris}
\affil[1]{Department d'informatique, Ecole Normale Sup\'erieure. Email: guodong.sun$|$francois.baccelli@ens.fr.}
\affil[2]{Institut national de recherche en sciences et technologies du num\'erique (INRIA).}
\affil[3]{Nokia Networks France. Email: luis.uzeda\_garcia$|$stefano.paris@nokia.com.}
\begin{document}

\maketitle

\begin{abstract}
Reconfigurable intelligent surfaces (RISs) are a promising technology for enhancing cellular network performance and yielding additional value to network operators. 
This paper proposes a techno-economic analysis of RIS-assisted cellular networks to guide operators in deciding between deploying additional RISs or base stations (BS).
We assume a relative cost model that considers the total cost of ownership (TCO) of deploying additional nodes, either BSs or RISs. 
We assume a return on investment (RoI) that is proportional to the system's spectral efficiency. 
The latter is evaluated based on a stochastic geometry model that gives an integral formula for the ergodic rate in cellular networks equipped with RISs. 
The marginal RoI for any investment strategy is determined by the partial derivative of this integral expression with respect to node densities.
We investigate two case studies: throughput enhancement and coverage hole mitigation. These examples demonstrate how operators could determine the optimal investment strategy in scenarios defined by the current densities of BSs and RISs, and their relative costs.
Numerical results illustrate the evolution of ergodic rates based on the proposed investment strategy, demonstrating the investment decision-making process while considering technological and economic factors.
This work quantitatively demonstrates that strategically investing in RISs can offer better system-level benefits than solely investing in BS densification. 
\end{abstract}
\begin{IEEEkeywords}
    Techno-economic analysis, stochastic geometry, sensitivity analysis, cellular network. 
\end{IEEEkeywords}

\section{Introduction}

The reconfigurable intelligent surface (RIS) technology is considered a promising complement to base stations (BS) and is expected to enhance a wide range of wireless services, including extended coverage, higher data rates, and improved reliability~\cite{xu2020understanding}. 
A RIS is composed of numerous low-cost reflectors that can intelligently shape radio signals with minimal power consumption, as only configuration energy is required.  
This makes the capital expenditure of deploying RIS hardware significantly lower than other beamforming technologies that rely on active antenna arrays~\cite{liu2022path}.
RISs can also reduce networks' operation expenditure since they consume less power compared to BSs~\cite{wang2021joint}. 
Additionally, since RISs have no power amplification, they generate less interference than BSs, and this feature also contributes to power efficiency and cost savings. 
Overall, the RIS technology may offer a cost-effective alternative to deploying additional BSs in cellular networks.

While many researchers have addressed the technical challenges of incorporating RISs into cellular networks, few results are available to assess economic questions quantitatively. 
In the literature on relay technology, the system-level techno-economic analyses mostly relied on Monte Carlo simulation results to quantify the expenditure for different technologies~\cite{lang2009business}. 
An analytical approach would be most helpful in quantifying the techno-economic trade-offs associated with RIS deployment.

Stochastic geometry serves as a powerful analytical tool for modeling and evaluating large-scale networks, effectively capturing the inherent uncertainty in user location and irregularity in node placement~\cite{baccelli2010stochastic}.
This tool has been applied to assess the performance of RIS.
For example, the authors in \cite{lyu2021hybrid} model both BSs and RISs as homogeneous Poisson point processes (PPP) and associate the typical user equipment (UE) to the nearest BS and RIS. 
By approximating the composite signal with a Gamma distribution, the authors derive a double integral expression to quantify the coverage probability. 
To consider the spatial correlation between RISs and BSs, the authors in~\cite{wang2023performance} use the Gauss-Poisson process to model the BSs and RISs, with a fixed distance. They also employ the Gamma approximation framework to analyze the performance. 
Recently, the authors in~\cite{sun2023performance} proposed a novel framework that captures the random placement of both BSs and RISs, enabling the performance analysis of a variety of RIS applications, such as throughput enhancement or coverage hole mitigation.
The authors in~\cite{sun2023performance} provide integral expressions for computing ergodic rate analytically.

In this work, we address the analysis of RISs from a techno-economic perspective using the stochastic geometry tools developed in \cite{sun2023performance} to characterize the ergodic rate of a typical UE, which reflects the spectral efficiency at the system level. 
To assess the marginal spectral efficiency of an investment, we introduce a new sensitivity analysis step into the integral expressions obtained in~\cite{sun2023performance}. 
One of our contributions is the verification of conditions for interchanging integration and differentiation operations. 
More precisely, the present paper provides integral formulas to compute the partial derivatives
of the ergodic rate of a typical UE with respect to (w.r.t.) the density of BSs and that of RISs.
By applying these formulas to a given cellular network with RISs, we can quantify the expected gain from increasing BS or RIS densities. This could be used to guide operators in deciding whether to invest in additional BSs or RISs so as to optimize the total cost of ownership (TCO). 
This investment strategy requires knowledge of the current densities of BSs and RISs and their relative costs, assuming a simplified return on investment (RoI) proportional to the ergodic rate of a typical UE.

This work is organized as follows: we present
the system model of the RIS-assisted cellular network and discuss the TCO estimation in Section~\ref{section:system_model}. 
We derive the first-order derivative expressions of the system performance w.r.t. the densities of BSs and RISs,
and discuss the optimal investment strategy in Section~\ref{section:analysis}.  
Numerical results are presented in~Section~\ref{sec:simulation}, where we interpret the techno-economic analysis.
We conclude this paper in Section~\ref{sec:conclusion}.

\section{Stochastic Geometry System Model}\label{section:system_model}

RISs can be deployed to address two primary use cases: enhancing network throughput and mitigating coverage holes. We present the stochastic geometry models of these two scenarios, accounting for the randomness of wireless network deployments.
We also model the TCO for deploying BSs or RISs per unit area.

\subsection{Thoughput Enhancement Model}
\begin{figure}
    \centering
    \includegraphics[width=0.7\linewidth]{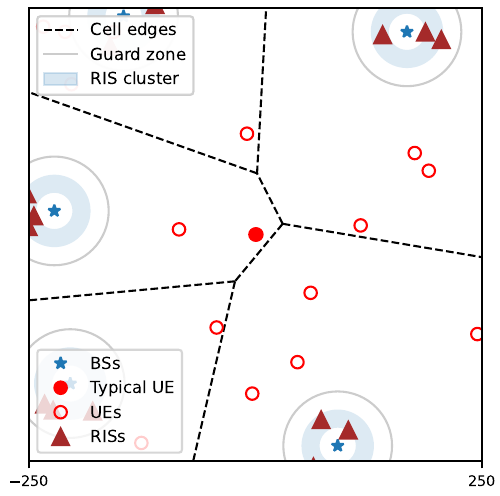}
    \caption{Modeling RIS-assisted Cellular Networks with Mat\'ern Cluster Processes, with a guard zone to prevent the typical UE from being near BSs or RISs.}
    \vspace{-0.5cm}
    \label{fig:system_model}
\end{figure}
We investigate the throughput enhancement scenario by modeling BSs and RISs as Mat\'ern cluster processes (MCP), as illustrated in Fig.~\ref{fig:system_model}. 
The location of BSs is modeled as a homogeneous PPP
$\Phi_{\rm BS} \triangleq \sum_{i\in \mathcal I} \delta_{\mathbf{x}_i}$
with intensity $\lambda_{{\rm BS}}$, where $\delta$ denotes the Dirac delta function, modeling the mother process of the MCP.  
Here, $\mathcal{I}$ is the index set of BSs.
Each RIS is assumed to be associated with a serving BS and the placement of RISs is modeled as the daughter processes of the MCP
$\Phi_{\rm RIS} = \cup_{i\in \mathcal{I}} \phi_i$, where 
$\phi_i$ denotes the RIS cluster of BS $i$.
By definition, $\phi_i= \sum_{j \in \mathcal{J}_i} \delta_{\mathbf{y}_{i,j}}$, 
where $\mathcal{J}_i$ is the index set of RISs of cluster $i$, is a PPP with intensity $\lambda_{{\rm RIS}}$,
with support on the ring  $\mathbb{D}_{\mathbf{x}_i}(R_{\rm in}, R_{\rm out})$ of
center $\mathbf{x}_i$, inner radius $R_{\rm in}$ and outer radius $R_{\rm out}$.
In addition, the UE locations are modeled as a homogeneous PPP $\Phi_{{\rm UE}}$ with intensity
$\lambda_{{\rm UE}}$, which is independent of $\Phi_{\rm BS}$ and $\Phi_{\rm RIS}$. 
We consider a guard zone to exclude UEs near BSs from throughput enhancement evaluation, as strong signals experienced by these UEs reduce the need for RIS configuration and overhead. 
Therefore, we only consider UEs beyond a certain distance $R_c$ to be jointly served by the BS and RISs in this cell, as indicated by the gray contour in Fig.~\ref{fig:system_model}. 
The typical UE in $\Phi_{{\rm UE}}$, marked as the filled red circle, is located at the origin $o$ and is served by its closest BS and the associated RIS cluster that is specifically indexed by $o$.
The probability distribution function (PDF) of the distance from the serving BS to the typical UE is $f(r)=2\pi \lambda_{{\rm BS}}re^{- \lambda_{{\rm BS}} \pi r^2}$\cite{andrews2011tractable}\cite{baccelli2020random}, with $r \triangleq ||\mathbf{x}_o||$.

\subsection{Coverage Hole Mitigation Model}
\begin{figure}
    \centering
    \includegraphics[width=0.7\linewidth]{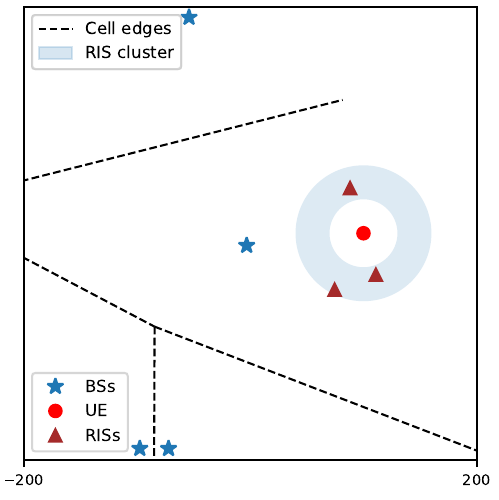}
    \caption{RISs are deployed to assist UEs in coverage holes}
    \label{fig:coverage_hole_model}
    \vspace{-0.5cm}
\end{figure}

To investigate coverage hole mitigation, we consider a scenario where a UE located at the center of a coverage hole experiences signal attenuation on the direct link.
We consider a cluster of RISs $\mathcal{J}_{o}$ is deployed in a ring around this hole, as shown in Fig.~\ref{fig:coverage_hole_model}. 
To model the effect of blockage on signal attenuation, we assume the direct link from the BS to the UE is penalized by a constant attenuation coefficient $K$. 
To model the distance between the coverage hole and the associated BS, we remark that the spatial distribution of BSs will impact the proximity of an arbitrary location to its closest BS~\cite{chiu2013stochastic}. 
For the two-dimensional PPP-modeled BSs, the expected value of this distance is inversely proportional to the square root of  $\lambda_{\rm BS}$. 
In this work, we consider an average scenario where the centers of both the coverage hole and the RIS ring are located at a distance $r_{\rm H}$ from its associated BS, which is a function of the BS density, given by 
\vspace{-0.1cm}
\begin{equation}\label{eq:cell_center}
	r_{\rm H} = \frac{C_{\rm H}}{\sqrt{\lambda_{\rm BS}}},
    \vspace{-0.1cm}
\end{equation}
where $C_{\rm H}$ is a constant, whose value is selected in the simulation section ensuring that the coverage hole and the cluster of RISs are with high probability within the cell.

Without loss of generality, we assume that the RISs can effectively serve their associated UEs\footnote{Models of reflective, refractive, or hybrid (reflective and refractive) RISs can be analyzed using the same method by changing the density of the corresponding RIS type.}.
In other words, each RIS is modeled as a point and is capable of reflecting signals in any direction according to a given RIS-UE association.
Note that both models contain the special case of the scenario without deploying RISs by setting $\lambda_{\rm RIS}=0$. 

\subsection{System Performance Analysis}

The channel from the BS to the UEs assisted by the RISs is time-dispersive. This is because RISs are spatially dispersed and the propagation distance of the direct and reflected paths are different.
It is shown in \cite{sun2023performance} that employing OFDM to mitigate the time-dispersion of
the channel caused by multipath reflections, the post-processed SINR is given by 
\begin{equation}\label{eq:SINR-def}
\text{SINR} = \frac{|\gamma_{D_o}|^2 + \sum_{j \in \mathcal{J}_o} | \gamma_{{R}_{o,j}}|^2 }{\sum_{i\in \mathcal{I}\setminus o}\big( |\gamma_{D_i}|^2 + \sum_{j \in \mathcal{J}_i} | \gamma_{{R}_{i,j}}|^2  \big) + \sigma^2},
\end{equation}
where $\sigma^2$ is the power of the additive Gaussian noise.
Here, $\gamma_{D_o}$ and $\gamma_{R_{o,j}}$ represent the received signal amplitude of the direct and reflected intended signals, respectively, while $\gamma_{D_i}$ and $\gamma_{{R}_{i,j}}$ represent the corresponding received signal amplitude of the direct and reflected interference, respectively.
Each signal amplitude is typically the product of a fading variable $\rho\in \mathbb{C}$, a transmission power $P_0$, and a pathloss function as the function of propagation distance. Specifically, the direct pathloss is given by $ \gamma_D = \rho \sqrt{g(d)P_0}$ for distance $d$ while the reflected pathloss is $ \gamma_R = \rho \sqrt{g(d_1)g(d_2)P_0}$, where the latter accounts for the two segments of the reflected path.  
Below, we take $g(d) = \beta(d+1)^{-\alpha}$, where the path-loss exponent $\alpha>2$,
and where $\beta = \frac{c}{4\pi f_c}$ is the average power gain for reference distance of $1$m, $f_c$ is the carrier frequency, and $c$ denotes the speed of light. 

The direct links are assumed to experience a Rayleigh fading $\rho_{D}\in \mathbb{C}$. 
As for the reflected links, the fading experienced by one RIS element, denoted by 
$\zeta$, is computed as the product $\rho_{\rm BS-RIS}\cdot \rho_{\rm RIS-UE}$
of two small-scale fading coefficients. This is because each RIS element is modeled as a
lossless phase shifter that can scatter the absorbed energy with a proper phase
shift~\cite{wu2019intelligent, zeng2021reconfigurable}.
Both $\rho_{\rm BS-RIS}$ and $\rho_{\rm RIS-UE}$ are assumed to follow a Rician distribution due to line-of-sight channel conditions, as RISs are usually deployed in favorable locations.
The small-scale fading of a reflected beam, formed by phase alignment of $M$ elements of a RIS panel, can be approximated by a Gaussian distributed random variable. 
This approximation relies on the central limit theorem, since the reflected beam magnitude is the sum of fading amplitude from a large number of RIS elements, given by
$|\rho_R| = \sum_{m=1}^{M}|\zeta^{(m)}| \approx \mathcal{N}(M\mathbb{E}[|\zeta |], M \mathbb{V}[|\zeta|])$,  where $|\zeta^{(m)}|$ is the fading for the $m^{\rm th}$ element.
Consequently, the power of the fading of a reflected beam follows the non-central chi-square distribution, denoted by $ |\rho_R|^2 \sim \chi^2$.

\vspace{-0.1cm}
\subsection{Total Cost of Ownership}
Below, we abstract the TCO by two categories of costs for both BS and RIS, namely, the capital expenditure
(CAPEX) and the operational expenditure (OPEX), denoted by $C_{\rm BS}, C_{\rm RIS}$ and $O_{\rm BS}, O_{\rm RIS}$, respectively.

From the observation that RIS is cost-effective compared to BS, which employs RF-chains and amplifiers~\cite{liu2022path},
we assume that the total CAPEX are such that $C_{\text{BS}} > C_{\text{RIS}}$.
Leveraging the fact that a given RIS is expected to require less power and maintenance~\cite{wang2022reconfigurable}, we also
assume that the total OPEX satisfy $O_{\text{BS}} > O_{\text{RIS}}$. 
The TCO is $\bar{C}_{\text{BS}} = O_{\text{BS}} + C_{\text{BS}}$ for each BS and 
$\bar{C}_{\text{RIS}} = O_{\text{RIS}} + C_{\text{RIS}}$ for each RIS.
Next, we introduce the parameter defining the cost ratio $J=\bar{C}_{\text{BS}}/\bar{C}_{\text{RIS}}$,
which will be used in the techno-economic analysis. 

Note that the $\lambda_{\rm RIS}$ represents the density of RISs within each cluster, which is conditioned on the spatial distribution of BSs.
When evaluating an investment policy for performance enhancement, the cost of deploying additional network elements, whether BSs or RISs, depends on the total number of elements added to an area. 
To achieve a fair comparison of deployment costs, the area-dependent cost, $\Delta \bar{C}_{\rm RIS}$, for increasing the RIS density, $\Delta\lambda_{\rm RIS}$, should be normalized in terms of the total area allocated for RIS deployment, rather than in terms of the cluster area of each cell. 
The normalized cost is given by 
        \vspace{-0.1cm}
\begin{equation}\label{eq:cost_RIS}
	\Delta \bar{C}_{\rm RIS} \propto \bar{C}_{\rm RIS}  \lambda_{\rm BS}\mathcal{A} \Delta \lambda_{\rm RIS},
        \vspace{-0.1cm}
\end{equation}
where $\mathcal{A}$ denotes the area of a cluster ring and $\lambda_{\rm BS}\mathcal{A}$ represents the total area.
Furthermore, when each BS is assumed to associate with $\mathcal{A}  \lambda_{\rm RIS}$ RISs, increasing the BS density $ \Delta \lambda_{\rm BS}$ implies adding the associated RISs, resulting in the cost $ \Delta \bar{C}_{\rm BS}$ due to both BSs and RISs, given by
        \vspace{-0.1cm}
\begin{equation}\label{eq:cost_BS}
	\Delta \bar{C}_{\rm BS} = \bar{C}_{\text{BS}} \Delta \lambda_{\rm BS} + \bar{C}_{\text{RIS}} \mathcal{A} \lambda_{\rm RIS}\Delta \lambda_{\rm BS}. 
\end{equation}

\section{Sensitivity Analysis and Investment Strategy}\label{section:analysis}
In this section, we quantify the ergodic rate enhancement of the UE, located either in a throughput-enhanced area or at the center of a coverage hole, as a function of increasing either the BS density or the RIS density. 
In addition, we propose an incremental deployment strategy to assist network operators in making informed decisions about BS and RIS investments.

\subsection{Integral Expression for Mean Ergodic Rate}

Reflected interference from RISs in non-associated cells, denoted by $\sum_{j \in \mathcal{J}_i} | \gamma_{{R}_{i,j}}|^2$ for all $i\neq o$,  can be negligible under certain conditions. 
Specifically, when non-associated reflected beams from other cells are randomly directed and have a low probability of overlapping with the UE of interest, the impact of this interference is minimal, as shown in \cite{sun2023performance}. 
We will empirically verify this condition in the simulation section. 
Assuming this factor has negligible influence on investment decision-making, we will neglect it in our analysis.

A communication link is considered successfully covered when $\text{SINR}\geq T$, where $T$ is the coverage threshold.
By manipulating SINR in Eq.~\eqref{eq:SINR-def} with the coverage threshold $T$, we have
\vspace{-0.2cm}
\begin{equation}\label{eq:manipulated}
    \big|\gamma_{D_o}\big|^2 \geq T\Big(\sum_{i\in \mathcal{I}\setminus o}\big|\gamma_{D_i}\big|^2   + \sigma^2\Big) -  \sum_{j \in \mathcal{J}_o} \big| \gamma_{{R}_{o,j}}\big|^2,  
\vspace{-0.3cm}
\end{equation}
where $I=\sum_{i\in \mathcal{I}\setminus o}\big|\gamma_{D_i}\big|^2$ represents the power of the aggregated interference and $S_R=\sum_{j \in \mathcal{J}_o} \big| \gamma_{{R}_{o,j}}\big|^2$ the power of the reflected signals, respectively. 
Denote the right-hand side of Eq.~\eqref{eq:manipulated} by $\Upsilon\in\mathbb{R}$ as a function of the threshold $T$.
The Laplace transform of $\Upsilon$ can be expressed as the product of the Laplace transform of
the scaled interference power $I$, the noise power $\sigma^2$, and the reflected signals power $-S_R$,
since they are built from independent random point processes and random variables, given
by\footnote{We let $\mathcal{B}(s)$ denote bilateral Laplace transform
and $\mathcal{L}(s)$ denote unilateral Laplace transform.}
$
\mathcal{B}_{\Upsilon}(s) = \mathcal{B}_{I}(sT) \mathcal{B}_{\sigma^2}(sT)  \mathcal{B}_{-S_R}(s).$ 
Specifically, $\mathcal{B}_{I}(sT)$ is obtained from the probability generating
functional (PGFL) of the PPP, given by
        \vspace{-0.1cm}
\begin{equation}\label{eq:laplace_I}
\begin{aligned}
    \mathcal{B}_{I}(sT) =&  e^{ -2\pi \lambda_{\rm BS} \int_{r}^{\infty} x \big(1-\mathcal{L}_{|\rho_{D}|^2}(s T P_0 g(x))\big){\rm d}x}, 
\end{aligned}
\end{equation}
where the Laplace transform of the Rayleigh faded interfering signal is $\mathcal{L}_{|\rho_{D}|^2}\big(s T P_0 g(x)\big) = \frac{1}{1+s T P_0 g(x)}$. In addition, for the Gaussian noise, we have $\mathcal{B}_{\sigma^2}(sT) = e^{-sT\sigma^2}$.
Similarly, the Laplace transform for the reflected signals is built from the PPP
of the RIS cluster associated with the typical BS, given by
        \vspace{-0.1cm}
\begin{equation}\label{eq:laplace_SR}
    \mathcal{B}_{-S_R}(s) = e^{-\lambda_{\rm RIS} \int_{R_{\rm in}}^{R_{\rm out}} \int_0^{2\pi} y\big(1 - \mathcal{L}_{|\rho_R|^2}(-s P_0 G(r, y, \psi)  ) \big) {\rm d}\psi {\rm d}y }.
\end{equation}
Here, the pathloss of the reflected path is $G(r,y,\psi) = g(y)g(\sqrt{r^2+y^2-2ry\cos{\psi}})$,
where $ r= \|\mathbf{x}_o\|, y=\|\mathbf{y}\|$,  represent the scalar distances of the BS and RIS to the UE, respectively, and $\psi \in [0, 2\pi)$ is the angle between the two parts of the reflected link.
In addition, the Laplace transform of the fading power of the reflected beam $|\rho_R|^2 \sim \chi^2$ is
        \vspace{-0.1cm}
\begin{equation}\label{Laplace_transform_rho_R}
    \mathcal{L}_{|\rho_R|^2}(s) = \frac{ \exp\Big(\frac{-s(M\mathbb{E}[|\zeta |])^2}{1+2s M \mathbb{V}[|\zeta|]}\Big)}{(1+2sM \mathbb{V}[|\zeta|])^{1/2}},
        \vspace{-0.1cm}
\end{equation}
where $s > -\frac{1}{2 M \mathbb{V}[|\zeta|]}$ specifies the region of convergence of the Laplace transform expression.

We can now state the integral representation of the coverage probability \cite[Theorem~1]{sun2023performance}:
\begin{theorem}
The coverage probability for a typical UE located at a distance of $r$ to the associated BS is given by
    \vspace{-0.1cm}
    \begin{equation}\label{eq:cov_prob}
        \mathsf{P}_c(T|r) = \mathcal{B}_{\Upsilon^+}\bigg(\frac{1}{  P_0g(r)}\bigg), 
        \vspace{-0.1cm}
    \end{equation}
    where  $\mathcal{B}_{\Upsilon^+}(s)$ is given by the formula
        \vspace{-0.1cm}
    \begin{equation}\label{eq:theorem}
    \mathcal{B}_{\Upsilon^+}(s) = \frac{1}{2\pi \imath}\int_{-\infty}^{\infty} \Big(\mathcal{B}_{\Upsilon}(s-\imath u)- \mathcal{B}_{\Upsilon}(-\imath u) \Big)\frac{{\rm d} u}{u} +  \frac{1+\mathcal{B}_{\Upsilon}(s)}{2}. 
    \end{equation}
Here, $\int_{-\infty}^{\infty}\frac{{\rm d}u}{u}$ is understood in the sense of Cauchy principal-value, that is $ \int_{-\infty}^{\infty}= \lim_{\epsilon \downarrow 0^+}\int_{-\infty}^{-\epsilon}+  \int_{\epsilon}^{\infty}$. 
\end{theorem}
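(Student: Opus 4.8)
The plan is to reduce the coverage probability to the Laplace transform of the positive part $\Upsilon^+ = \max(\Upsilon,0)$ and then to establish the inversion-type identity \eqref{eq:theorem} purely at the level of transforms. First I would exploit the Rayleigh assumption on the direct link: writing $|\gamma_{D_o}|^2 = |\rho_D|^2 P_0 g(r)$ with $|\rho_D|^2$ a unit-mean exponential that is independent of everything entering $\Upsilon$, the coverage event $|\gamma_{D_o}|^2 \ge \Upsilon$ from \eqref{eq:manipulated} is evaluated by conditioning on $\Upsilon$. Since $|\gamma_{D_o}|^2 \ge 0$, the conditional probability equals $1$ when $\Upsilon \le 0$ and $e^{-\Upsilon/(P_0 g(r))}$ when $\Upsilon > 0$, i.e.\ exactly $e^{-s\Upsilon^+}$ with $s = 1/(P_0 g(r))$. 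Taking expectations gives $\mathsf{P}_c(T\mid r) = \mathbb{E}[e^{-s\Upsilon^+}] = \mathcal{B}_{\Upsilon^+}(s)$, which matches the left-hand side of \eqref{eq:cov_prob} and isolates the remaining task: expressing $\mathbb{E}[e^{-s\Upsilon^+}]$ through the bilateral transform $\mathcal{B}_\Upsilon$.

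For that step I would split according to the sign of $\Upsilon$,
\[
\mathbb{E}\!\left[e^{-s\Upsilon^+}\right] = \mathbb{E}\!\left[e^{-s\Upsilon}\mathbf{1}_{\Upsilon>0}\right] + \mathbb{P}(\Upsilon \le 0),
\]
and insert the Fourier representation of the Heaviside step, $\mathbf{1}_{x>0} = \tfrac12 + \tfrac{1}{2\pi\imath}\,\mathrm{PV}\!\int_{-\infty}^{\infty}\tfrac{e^{\imath u x}}{u}\,\mathrm{d}u$, which is valid for $x \ne 0$ and hence almost surely, as $\Upsilon$ is atomless. Interchanging expectation and integration and recognising $\mathbb{E}[e^{-(s-\imath u)\Upsilon}] = \mathcal{B}_\Upsilon(s-\imath u)$ and $\mathbb{E}[e^{\imath u\Upsilon}] = \mathcal{B}_\Upsilon(-\imath u)$, together with $\mathcal{B}_\Upsilon(0)=1$, the two pieces collapse to $\tfrac{1+\mathcal{B}_\Upsilon(s)}{2}$ plus the principal-value integral of $\big(\mathcal{B}_\Upsilon(s-\imath u) - \mathcal{B}_\Upsilon(-\imath u)\big)/u$, which is precisely \eqref{eq:theorem}.

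The hard part is justifying the interchange of expectation with the Cauchy principal-value integral. I would first check that $\mathcal{B}_\Upsilon$ is analytic in a strip containing both the imaginary axis and the line $\mathrm{Re}(s)=1/(P_0 g(r))$, so that $\mathcal{B}_\Upsilon(s-\imath u)$ and $\mathcal{B}_\Upsilon(-\imath u)$ are well defined for every real $u$; here the region-of-convergence constraint recorded after \eqref{Laplace_transform_rho_R} is what keeps the reflected-signal factor $\mathcal{B}_{-S_R}$ finite. I would then control the integrand in its two delicate regions: near $u=0$ the difference $\mathcal{B}_\Upsilon(s-\imath u)-\mathcal{B}_\Upsilon(-\imath u)$ stays bounded while the $1/u$ singularity is absorbed by the principal value, and as $u\to\pm\infty$ the decay of the two transforms together with the $1/u$ factor secures absolute integrability on $\{|u|>\epsilon\}$. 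With these bounds, Fubini applies on each truncated domain $\int_{-A}^{-\epsilon}+\int_{\epsilon}^{A}$, and a dominated-convergence argument lets me pass to the limits $\epsilon \downarrow 0$ and $A \to \infty$, completing the identification of $\mathcal{B}_{\Upsilon^+}(s)$ with \eqref{eq:theorem}.
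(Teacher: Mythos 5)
Your derivation is correct and is the standard argument behind this identity; note that the paper itself states the theorem without proof, importing it from Theorem~1 of the cited reference, so the only comparison available is with the supporting machinery in the appendix. Your route --- reducing the Rayleigh-faded coverage event of \eqref{eq:manipulated} to $\mathbb{E}[e^{-s\Upsilon^+}]$ with $s=1/(P_0g(r))$, then expanding $\mathbf{1}_{\Upsilon>0}$ via the principal-value Fourier representation of the sign function to obtain \eqref{eq:theorem} --- is exactly what the notation $\mathcal{B}_{\Upsilon^+}$ encodes, and the regularity facts you invoke to justify the interchange (a convergence strip for $\mathcal{B}_{\Upsilon}$ containing both the imaginary axis and $\Re s = 1/(P_0g(r))$, atomlessness of $\Upsilon$, and decay of the characteristic function at infinity) are precisely those the paper establishes in Lemma~\ref{le:convergence} and Lemma~\ref{lemma:smoothness} of its appendix for the analogous interchanges in the sensitivity analysis.
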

Based on the Shannon rate formula and the coverage probability, the mean ergodic rate of a typical UE at a distance $r$ from the associated BS is given by an integral:
    \vspace{-0.1cm}
\begin{equation}\label{eq:ergo_rate_at_r}
    \tau(r) =  \int_{0}^{\infty} \mathcal{B}_{\Upsilon^+}\bigg(\frac{1}{P_0g(r)}\bigg)\frac{1}{1+t}\text{d}t. 
    \vspace{-0.1cm}
\end{equation}

\subsection{Sensitivity Analysis for Throughput Enhancement}
To quantify the throughput of RIS-assisted networks, we consider the mean ergodic rate of a UE located at the distance $r\in [R_c, \infty)$ from the BS, given by 
\begin{equation}\label{eq:ergo_rate}
    \tau= \int_{0}^{\infty} \int_{R_c}^{\infty}  \mathcal{B}_{\Upsilon^+}\bigg(\frac{1}{P_0g(r)}\bigg)\frac{1}{1+t} 2\pi \lambda_{\rm BS} r e^{-\pi r^2\lambda_{\rm BS}} \text{d}r \text{d}t, 
\end{equation}
with $\tau(r)$ given in Eq.~\eqref{eq:ergo_rate_at_r}.
In what follows, we assume that RoI is proportional to the mean ergodic rate and we represent the differential RoI by
$\frac{\partial \tau}{\partial \lambda_{\rm BS}}$
and $ \frac{\partial \tau}{\partial \lambda_{\rm RIS}}$.
We will then analyze the investment strategy in Subsection~\ref{subsec_strategy}.

We now derive the first-order derivatives of Eq.~\eqref{eq:ergo_rate} w.r.t. the density of BSs or RISs,
while keeping all other variables fixed, which is the basis for economic analysis.
Differentiating Eq.~\eqref{eq:ergo_rate} requires an interchange of the integration and differentiation operators. 
For this purpose, an appropriate extension of the Leibniz integral rule \cite{bogachev2007measure} is given in Appendix~\ref{sec:appendix}.
Applying the Leibniz integral rule, we get
\begin{equation}\label{eq:partial_lambda_BS}
\begin{aligned}
    \frac{\partial \tau}{\partial\lambda_{\rm BS}} = \int_{0}^{\infty} \int_{R_c}^{\infty}& \bigg( \frac{\partial \mathcal{B}_{\Upsilon^+}(s)}{\partial\lambda_{\rm BS}}  \lambda_{\rm BS}  + \mathcal{B}_{\Upsilon^+}(s) (1   - \pi r^2 \lambda_{\rm BS} ) \bigg)  \\
    &2\pi r e^{-\pi r^2\lambda_{\rm BS}}\frac{1}{t+1}\text{d}r\text{d}t.
\end{aligned}
\vspace{-0.1cm}
\end{equation}
Here, we determine $\frac{\partial \mathcal{B}_{\Upsilon^+}(s)}{\partial\lambda_{\rm BS}}$  through another application of the Leibniz integral rule to Eq.~\eqref{eq:theorem}, which also involves interchanging of integration and differentiation. 
While the Leibniz integral rule requires the function under the integral sign to be a continuous function, the singular integral in Eq.~\eqref{eq:theorem} has a singularity at zero and at limit points at infinity. 
The conditions of this interchange are verified in Appendix~\ref{sec:appendix}.
Subsequently, we have
\vspace{-0.1cm}
\begin{equation}\label{eq:first_order_derivative_BS}
\begin{aligned}
        &\frac{\partial \mathcal{B}_{\Upsilon^+}(s)}{\partial\lambda_{\rm BS}} = \\
        &\frac{1}{2\pi \imath}\int_{-\infty}^{\infty} \bigg(\frac{\partial\mathcal{B}_{\Upsilon}(s-\imath u)}{\partial\lambda_{\rm BS}}- \frac{\partial\mathcal{B}_{\Upsilon}(-\imath u)}{\partial\lambda_{\rm BS}} \bigg)\frac{{\rm d} u}{u}  
        + \frac{1}{2}\frac{\partial\mathcal{B}_{\Upsilon}(s)}{\partial\lambda_{\rm BS}},
\end{aligned}
\end{equation}
where the expression for the derivative of the Laplace transform $\mathcal{B}_{\Upsilon}(s)$ w.r.t. the BS density is given by
\begin{equation}\label{eq:partial_BS}
\begin{aligned}
    &\frac{\partial \mathcal{B}_{\Upsilon}(s) }{\partial\lambda_{\rm BS}} = -\mathcal{B}_{\Upsilon}(s) \cdot 2\pi \int_{r}^{\infty} x \Big(1-\mathcal{L}_{|\rho_{D}|^2}\big(s T P_0 g(x)\big)\Big){\rm d}x. 
\end{aligned}
\end{equation}

Similar to Eq.~\eqref{eq:partial_lambda_BS}, we can derive the following expression for the partial derivative of $\tau$ w.r.t. $\lambda_{\rm RIS}$, given by
\begin{equation}\label{eq:partial_lambda_RIS}
\begin{aligned}
    \frac{\partial \tau}{\partial\lambda_{\rm RIS}} = \int_{0}^{\infty} \int_{R_c}^{\infty} \bigg( \frac{\partial \mathcal{B}_{\Upsilon^+}(s)}{\partial\lambda_{\rm RIS}}   \bigg)  2\pi r\lambda_{\rm BS} e^{-\pi r^2\lambda_{\rm BS}}\frac{1}{t+1}\text{d}r\text{d}t,
\end{aligned}
\end{equation}
where $\frac{\partial \mathcal{B}_{\Upsilon^+}(s)}{\partial\lambda_{\rm RIS}}$ can be expressed following the steps in Eq.~\eqref{eq:first_order_derivative_BS} and Eq.~\eqref{eq:partial_BS}.

\subsection{Sensitivity Analysis for Coverage Hole Mitigation}
To analyze the impact of RIS on coverage hole mitigation, we are interested in the performance in terms of the ergodic rate of the UE located in the center of the coverage hole. 
In this case, assuming a fixed BS-UE distance eliminates the need for integration over the distance $r$. Consequently, $\tau$ is given by Eq.~\eqref{eq:ergo_rate_at_r} instead of Eq.~\eqref{eq:ergo_rate}.
Moreover, the derivatives obtained in the previous subsection require further modifications.

Recall that we introduce a penalty factor of $K$ to account for the impact of blocked direct links. 
This penalty factor influences the argument of the Laplace transform to express the coverage probability in Eq.~\eqref{eq:cov_prob}, which is then modified to $\frac{1}{KP_0g(r_{\rm H})}$.
Furthermore, changing the BS density also varies the average distance $r_{\rm H}$ between coverage hole centers and their cell centers, as defined in Eq.~\eqref{eq:cell_center}.
In the following, we need to consider two factors due to the dependence of $r_{\rm H}$ on $\lambda_{\rm BS}$. First, the non-associated BSs within a distance of $[r_{\rm H},\infty)$ that introduce interference are closer to the UE for larger values of $\lambda_{\rm BS}$.  Second, the argument $s=\frac{1}{KP_0g(r_{\rm H})}$ is a function of $r_{\rm H}$ and consequently depends on  $\lambda_{\rm BS}$. Let $f_s(r_{\rm H}) = \frac{1}{KP_0g(r_{\rm H})}$ be the function describing this dependence in the following. The changing rates of these two factors are given by 
\vspace{-0.1cm}
\begin{equation}\label{eq:additional_derivative}
	\frac{\partial r_{\rm H}}{\partial \lambda_{\rm BS}} = -\frac{C_{\rm H}}{2\sqrt{\lambda_{\rm BS}^3}}, ~\frac{\partial f_s(r_{\rm H})}{\partial \lambda_{\rm BS}} = -\frac{1}{KP_0g^2(r_{\rm H})}\frac{\partial g(r_{\rm H})}{\partial r_{\rm H}}\frac{\partial r_{\rm H}}{\partial \lambda_{\rm BS}}. 
\end{equation}
Recall that $\mathcal{B}_{\Upsilon}(s) = \mathcal{B}_{I}(sT)\mathcal{B}_{-S_R}(s)\mathcal{B}_{\sigma^2}(sT)$, we have 
\vspace{-0.1cm}
\begin{equation}\label{eq:products}
\begin{aligned}
\frac{\partial \mathcal{B}_{\Upsilon}(s)}{\partial \lambda_{\rm BS}} =& \frac{\partial \mathcal{B}_{I}(sT)}{\partial \lambda_{\rm BS}}   \mathcal{B}_{-S_R}(s) \mathcal{B}_{\sigma^2}(sT) + \frac{\partial \mathcal{B}_{-S_R}(s)}{\partial \lambda_{\rm BS}} \mathcal{B}_{I}(sT) \\&  \mathcal{B}_{\sigma^2}(sT) + \frac{\partial \mathcal{B}_{\sigma^2}(sT)}{\partial \lambda_{\rm BS}} \mathcal{B}_{I}(sT) \mathcal{B}_{-S_R}(s).  
\end{aligned}
\vspace{-0.13cm}
\end{equation}
We will discuss the derivatives in Eq.~\eqref{eq:products} separately. First, for the derivative related to the interference $I$, we have 
\vspace{-0.1cm}
\begin{equation}\label{eq:derivaive_I}
	\frac{\partial \mathcal{B}_{I}(sT)}{\partial \lambda_{\rm BS}} = \mathcal{B}_{I}(sT) \Big(\mathcal{D}_{\rm BS}(sT) + \lambda_{\rm BS} \frac{\partial \mathcal{D}_{\rm BS}(sT) }{\partial \lambda_{\rm BS}}\Big), 
\end{equation}
where $\mathcal{D}_{\rm BS}(sT)=-2\pi \int_{r_{\rm H}}^{\infty} \frac{xsTP_0g(x)}{1+sTP_0g(x)}{\rm d}x$ from Eq.~\eqref{eq:laplace_I}. 
In Eq.~\eqref{eq:derivaive_I}, the expressions of $\frac{\partial \mathcal{D}_{\rm BS}(sT) }{\partial \lambda_{\rm BS}}$ are different when differentiating $\mathcal{B}_{\Upsilon}(s-\imath u)$ and $\mathcal{B}_{\Upsilon}(-\imath u)$ in Eq.~\eqref{eq:theorem}, respectively, For the former, both $r_{\rm H}$ and $s$ are dependent on $\lambda_{\rm BS}$.  
To differentiate $\mathcal{D}_{\rm BS}(sT) $ w.r.t. $\lambda_{\rm BS}$, another application of the Leibniz rule is needed. 
The condition for this application is discussed in Appendix~\ref{app:leibniz_coverage_hole}.
As a result, $\frac{\partial \mathcal{D}_{\rm BS}(sT) }{\partial \lambda_{\rm BS}}$ is given by
\begin{equation}\label{partial_D_I}
\begin{aligned}
\frac{\partial \mathcal{D}_{\rm BS}(sT) }{\partial \lambda_{\rm BS}} = &2\pi \frac{r_{\rm H}sTP_0g(r_{\rm H})}{1+sTP_0g(r_{\rm H})}\frac{\partial r_{\rm H}}{\partial \lambda_{\rm BS}} \\& -\int_{r_{\rm H}}^{\infty} \frac{2\pi xTP_0g(x)}{\big(1+sTP_0g(x)\big)^2} \frac{\partial f_s(r_{\rm H})}{\partial \lambda_{\rm BS}}{\rm d}x,  
\end{aligned}
\end{equation}
where $\frac{\partial r_{\rm H}}{\partial \lambda_{\rm BS}}$ and $\frac{\partial f_s(r_{\rm H})}{\partial \lambda_{\rm BS}}$ are given in Eq.~\eqref{eq:additional_derivative}. 
For the latter, $\mathcal{B}_{\Upsilon}(-\imath u)$, where the argument $ -\imath u$ does not depend on $r_{\rm H}$, the derivative  $\frac{\partial \mathcal{D}_{\rm BS}(sT) }{\partial \lambda_{\rm BS}}$ only contains the first part of Eq.~\eqref{partial_D_I} since $\frac{\partial f_s(r_{\rm H})}{\partial \lambda_{\rm BS}}$ in the second term is equal to zero.

Next, we can express the derivative of $\mathcal{B}_{-S_R}(s)$ as
\begin{equation}\label{eq:Derivative_S_R_BS}
\begin{aligned}
\vspace{-0.3cm}
	&\frac{\partial \mathcal{B}_{-S_R}(s)}{\partial \lambda_{\rm BS}} = \mathcal{B}_{-S_R}(s)\lambda_{\rm RIS} \frac{\partial \mathcal{D}_{\rm RIS}(s)}{\partial \lambda_{\rm BS}},
\end{aligned}
\vspace{-0.14cm}
\end{equation} 
where we use $\mathcal{D}_{\rm RIS}(s)$ to abbreviate $$-\int_{0}^{2\pi}\int_{R_{\rm in}}^{R_{\rm out}} y\Big(1- \mathcal{L}_{|\rho_R|^2}\big(-s P_0 G(r_{\rm H}, y, \psi)\big)\Big) {\rm d}y {\rm d}\psi.$$
Note that $\frac{\partial \mathcal{D}_{\rm RIS}(s)}{\partial \lambda_{\rm BS}}$ necessitates the application of the Leibniz rule to interchange differentiation and integration, given by
\begin{equation}
\begin{aligned}
&-\frac{\partial}{\partial \lambda_{\rm BS}} \int_{0}^{2\pi}\int_{R_{\rm in}}^{R_{\rm out}} y\Big(1- \mathcal{L}_{|\rho_R|^2}\big(-s P_0 G(r_{\rm H}, y, \psi)\big)\Big) {\rm d}y {\rm d}\psi  \\
    &=\int_{0}^{2\pi}\int_{R_{\rm in}}^{R_{\rm out}} \bigg(y \frac{\partial\mathcal{L}_{|\rho_R|^2}\big(-s P_0 G(r_{\rm H}, y, \psi)\big)}{\partial \lambda_{\rm BS}}\bigg) {\rm d}y {\rm d}\psi , 
\end{aligned}
\end{equation}
where the condition is discussed in Appedix~\ref{app:leibniz_coverage_hole}. 
When differentiating $\mathcal{B}_{\Upsilon}(s-\imath u)$ and $\mathcal{B}_{\Upsilon}(-\imath u)$, 
note that in the former, both $G(r_{\rm H}, y, \psi)$ and $s$ are dependent on $r_{\rm H}$, whereas in the latter only $G(r_{\rm H}, y, \psi)$ depends on $r_{\rm H}$. It is important to express the derivatives accounting for these factors. 
For $\mathcal{B}_{\Upsilon}(s-\imath u)$, the derivative of Eq.~\eqref{Laplace_transform_rho_R} is 
\vspace{-0.1cm}
\begin{equation}\label{eq:derivative_S_R_additional}
\begin{aligned}
&\frac{\partial\mathcal{L}_{|\rho_R|^2}\big(-s P_0 G(r_{\rm H}, y, \psi)\big)}{\partial \lambda_{\rm BS}} =\\&
	\bigg(\frac{ M(\mathbb{E}[|\zeta|])^2/\mathbb{V}[|\zeta|]-2 s P_0M\mathbb{V}[|\zeta|]G(r_{\rm H},y,\psi)+1 }{(1-2 s P_0G(r_{\rm H},y,\psi)M\mathbb{V}[|\zeta|])^{5/2}}\bigg)
	\\
	& P_0M\mathbb{V}[|\zeta|]\exp{\bigg(\frac{ s P_0G(r_{\rm H},y,\psi)(M\mathbb{E}[|\zeta|])^2 }{1-2s P_0G(r,y,\psi)) M\mathbb{V}[|\zeta|] }\bigg)}\\&\bigg(s\frac{\partial	G(r_{\rm H},y,\psi)}{\partial \lambda_{\rm BS}}+G(r_{\rm H},y,\psi)\frac{\partial	f_s(r_{\rm H})}{\partial \lambda_{\rm BS}} \bigg),
\end{aligned}
\vspace{-0.1cm}
\end{equation}
where $\frac{\partial G(r_{\rm H},y,\psi)}{\partial \lambda_{\rm BS}} = \frac{\partial	G(r,y,\psi)}{\partial r_{\rm H}} \cdot  \frac{\partial	r_{\rm H}}{\partial \lambda_{\rm BS}}$, with $\frac{\partial r_{\rm H}}{\partial \lambda_{\rm BS}}$ and $\frac{\partial f_s(r_{\rm H})}{\partial \lambda_{\rm BS}}$ given in Eq.~\eqref{eq:additional_derivative}.  
For differentiating $\mathcal{B}_{\Upsilon}(-\imath u)$, where the argument $- \imath u$ does not depend on $r_{\rm H}$, the factor $G(r_{\rm H},y,\psi)\frac{\partial	f_s(r_{\rm H})}{\partial \lambda_{\rm BS}}$ in Eq.~\eqref{eq:derivative_S_R_additional} is zero. 

Last, we express the derivative of $\mathcal{B}_{\sigma^2}(sT)$ as 
\begin{equation}
	\frac{\partial \mathcal{B}_{\sigma^2}(sT)}{\partial \lambda_{\rm BS}} = -T\sigma^2e^{-sT\sigma^2}\frac{\partial f_s(r_{\rm H})}{\partial \lambda_{\rm BS}},
\vspace{-0.2cm}
\end{equation}
where $\frac{\partial f_s(r_{\rm H})}{\partial \lambda_{\rm BS}}$ is given in Eq.~\eqref{eq:additional_derivative}. 

The above modifications are necessary for the derivative w.r.t. $\lambda_{\rm BS}$. However, for the derivative w.r.t. $\lambda_{\rm RIS}$, no modification is needed. This is because adding RISs does not change the underlying spatial deployment of the BSs, and the expression in Eq.~\eqref{eq:partial_lambda_RIS}, modified by removing the integration over the distribution of $r$, gives the derivative for this scenario.

\vspace{-0.2cm}
\subsection{Incremental Investment Strategy}\label{subsec_strategy}
We then show how the above derivatives allow one to develop a strategic approach for incremental investment in deploying RIS-enhanced cellular systems. 
Recall that the RoI is assumed to be proportional to the ergodic rate, while the TCO is proportional to the number of deployed BSs and RISs.    
In each investment round, we use the partial derivatives $\frac{\partial \tau}{\partial\lambda_{\rm BS}}$ and $\frac{\partial \tau}{\partial\lambda_{\rm RIS}}$ to quantify the expected increase in ergodic rate, denoted by $E$, as a result of an increase in the density of either BS or RIS due to an investment. 
Therefore, we assume $E_{\rm BS} = \frac{\partial \tau}{\partial\lambda_{\rm BS}}$. However, note that the number of RIS in a given area depends not only on the RIS density but also on the number of BS and the size of the RIS cluster ring. As a result, we have $E_{\rm RIS} = \lambda_{\rm BS}\mathcal{A} \frac{\partial \tau}{\partial\lambda_{\rm RIS}}$, where $\mathcal{A}$ is the area of the cluster ring.

Here, the decision-making process is determined by comparing the expected gain ratio to the cost ratio to identify the more cost-effective investment. 
Recall that we denote the cost ratio $J = \frac{\bar{C}_{\rm BS}}{\bar{C}_{\rm RIS}}$. The incremental costs, $ \Delta \bar{C}_{\rm BS}$ and $ \Delta \bar{C}_{\rm RIS}$, are defined in  Eq.~\eqref{eq:cost_RIS} and Eq.~\eqref{eq:cost_BS}, respectively.
The policy is then given in Algorithm~\ref{alg:strategy}.
\begin{algorithm}[ht]
\caption{Policy for investment.}\label{alg:strategy}
\begin{algorithmic}
\State Given the cost ratio $J$.\Comment{Total Cost of Ownership.}
\While{investing}
\State Get $\lambda_{\rm BS}, \lambda_{\rm RIS}$.
\State Sensitivity to $\lambda_{\rm BS}$: $E_{\rm BS}=\frac{\partial \tau}{\partial\lambda_{\rm BS}}.$
\State Sensitivity to $\lambda_{\rm RIS}$:  $E_{\rm RIS}=\lambda_{\rm BS}\mathcal{A} \frac{\partial \tau}{\partial\lambda_{\rm RIS}}.$
\If {$\frac{E_{\rm BS}}{E_{\rm RIS}} \geq J(1+\lambda_{\rm RIS}\mathcal{A}/J)$ }
\State Allocate the investment to increase the BS density
\Else
\State Allocate the investment to increase the RIS density
\EndIf
\EndWhile
\end{algorithmic}
\end{algorithm}

\section{Numerical Results and Decision Analysis}\label{sec:simulation}

The system parameters for the numerical analysis are as follows unless otherwise specified. 
Throughout the simulation, we consider a BS density ranging from 3 to 40 BS/km$^{2}$, resulting in average cell radii ranging from approximately 80 to 300 meters. 
The geometry of RIS clusters is fixed. It has an inner radius of 20 meters and an outer radius of 30 meters. 
Plot labels indicate the number of RIS elements of each RIS panel, representing varying beamforming capabilities.
We consider that each BS transmits at $P_0 = 30$dBm, and the noise power is set as -100dBm. 

\vspace{-0.2cm}
\subsection{Performance of RISs for Throughput Enhancement }
\begin{figure}[htp!]
    \centering
    \includegraphics[width=0.85\linewidth]{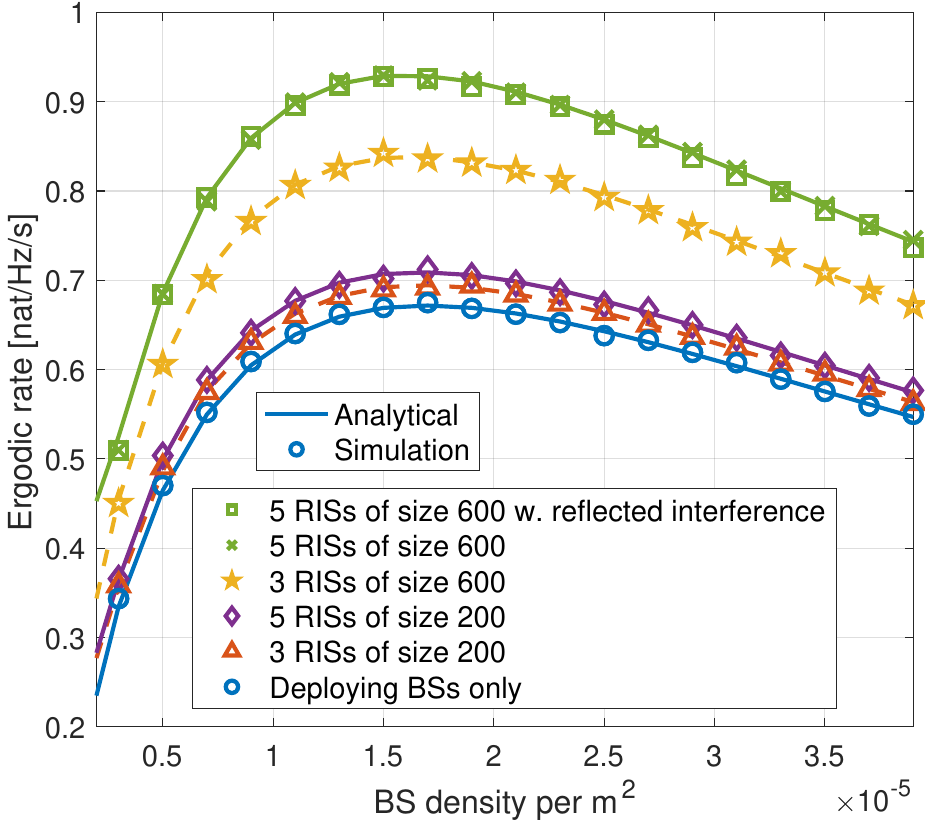}
    \caption{Mean ergodic rate of a typical UE as a function of the density of BSs for different RIS densities.}
    \label{rate_overall}
    \vspace{-0.3cm}
\end{figure}

\begin{figure}
    \centering
    \includegraphics[width=0.82\linewidth]{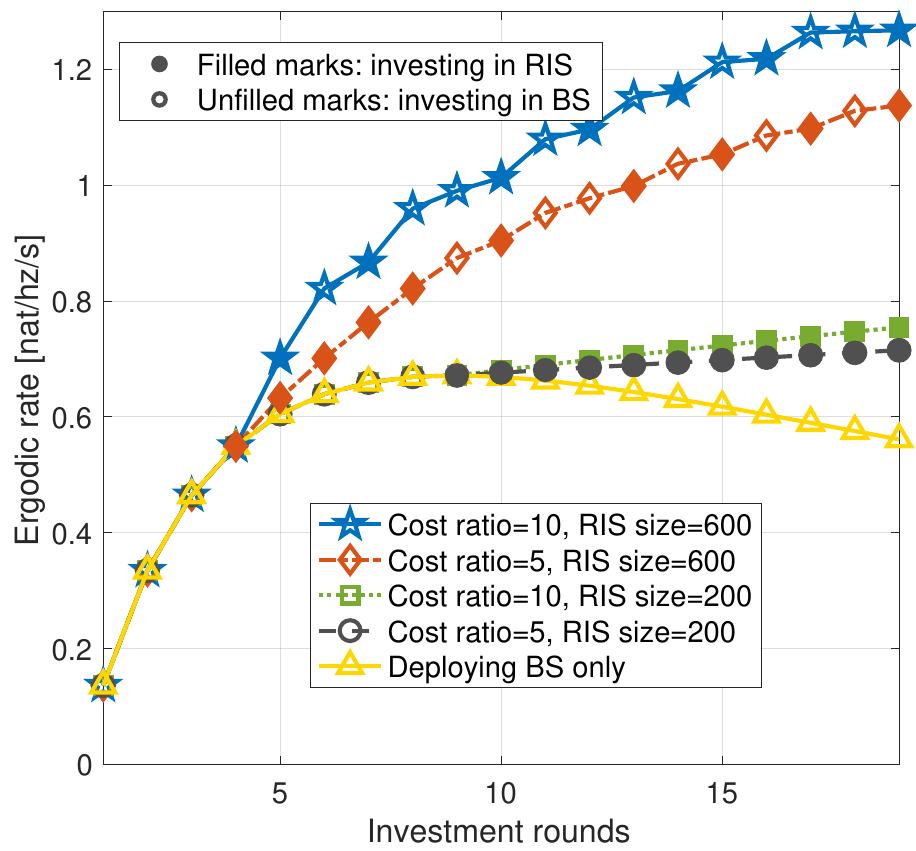}
    \caption{Ergodic rate evolution based on investment in either BSs or RISs using Algorithm~\ref{alg:strategy}}
    \label{fig:decision_map}
    \vspace{-0.4cm}
\end{figure}

\begin{figure}
    \centering
    \includegraphics[width=0.88\linewidth]{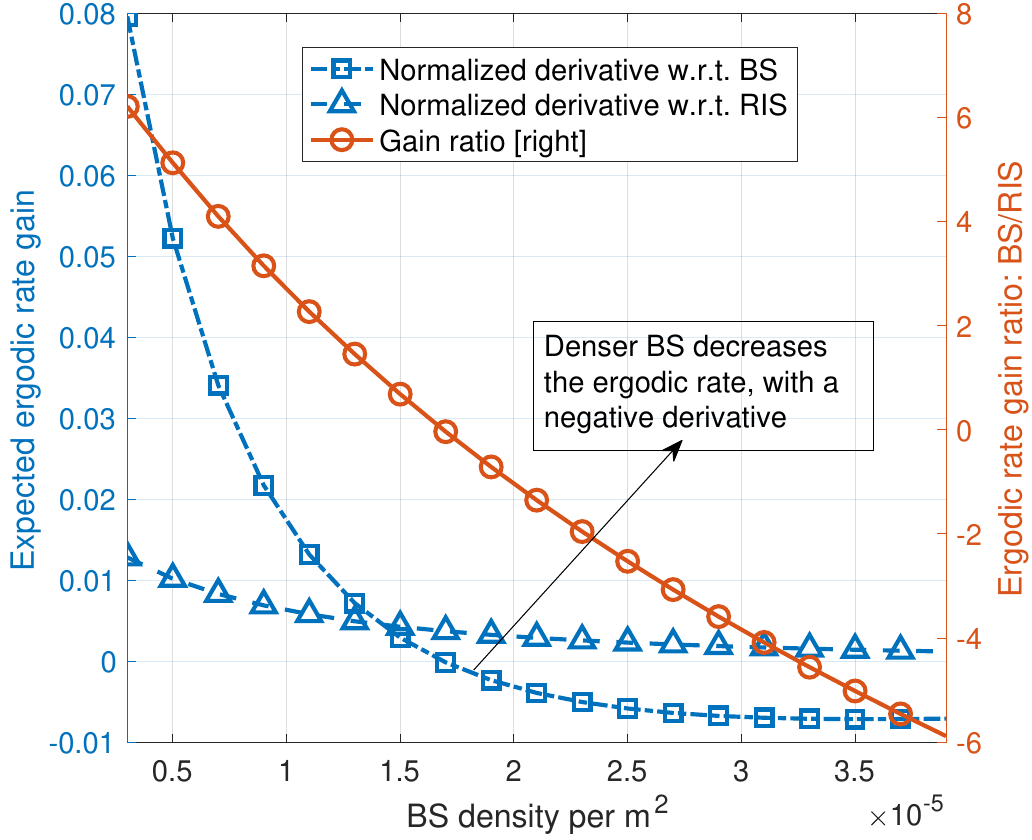}
    \caption{Should one deploy more BSs before deploying RISs?}
    \label{first_order_improve}
    \vspace{-0.6cm}
\end{figure}

Fig.~\ref{rate_overall} depicts the ergodic rate of a typical UE, where the guard zone $R_c$ is $50$ meters\footnote{This guard distance ensures that the UE is not in the near field of either BS or RIS.}, as a function of the BS density\footnote{The BS density under consideration spans a range where
the inter-BS distance is significantly greater than the cluster radius, ensuring that the RISs are in
proximity to their associated BSs. For example, the average inter-BS distance for
$\lambda_{\rm BS}=40/{\rm km}^2$ is approximately $80$ meters, which exceeds twice the outer radius (30 meters).}. 
It also illustrates the impact of varying RIS densities of each cluster by considering 3 or 5 RISs per cluster.
From Fig.~\ref{rate_overall}, we observe that the mean ergodic rate is primarily constrained by the Gaussian
noise when the BS density is low, and by the aggregated interference when the BS density becomes high. 
Note that the densification of BSs can be detrimental (in terms of spectral efficiency) above a certain BS density. 
This observation is consistent with the findings reported in~\cite{alammouri2017sinr}.  
Additionally, a higher BS density reduces the likelihood of a randomly located UE being outside the constant guard distance, reducing the probability of RIS-assisted communication. This also explains the decrease of the ergodic rate for a very high BS density. 
Furthermore, we simulate the scenario where all RISs reflect interference to other cells with a probability of beam overlapping $1\%$ (for a beam width of 3.6 degrees\footnote{ The half-power beamwidth for a 64-element RIS is estimated as 1-2 degrees in \cite{han2020half}.}), depicted by green square marks in Fig.~\ref{rate_overall}.  
By observing the overlap between the cross marks and square marks, we confirm that reflected interference is negligible when the overlap probability is low.
Comparing the curves, we observe that the size of RIS is the most significant factor in improving the ergodic rate, with 600-element RIS significantly outperforming the 200-element one.

In Fig.~\ref{fig:decision_map}, we discuss the successive deployment strategy for the throughput enhancement scenario based on Algorithm~\ref{alg:strategy}.
The initial scenario assumes a deployment density of $1$ BSs per ${\rm km}^2$ without RISs.
In each round, the operator has the investment budget that can deploy either $2$ BSs per ${\rm km}^2$ or equivalently $\bar{C}_{\rm BS}/\bar{C}_{\rm RIS}$ RISs. 
The investment is then directed toward deploying the type of node that yields a larger incremental gain.
In particular, we assume cost ratios $\bar{C}_{\rm BS}/\bar{C}_{\rm RIS}$ to be 5 or 10, and consider the size of RIS to be either 200 or 600 elements. 
In Fig.~\ref{fig:decision_map}, filled marks indicate RIS investment, while unfilled marks denote BS investment.
Fig.~\ref{fig:decision_map} demonstrates that a higher cost ratio not only results in an earlier decision to deploy RISs but also leads to a higher overall performance gain due to the ability to deploy more RISs with the same investment.
Specifically, it is observed that the ergodic rate grows slowly when deploying small RISs (200 elements). However, RISs can still provide a solution for mitigating negative impacts in dense BS deployments. 
On the contrary, when RISs are larger (600 elements), they are capable of providing stronger reflected signals. The ergodic rate evolution suggests a preference for hybrid BS-RIS deployments.

Fig.~\ref{first_order_improve} compares the expected ergodic rate gain from adding one BS or one RIS (600 elements) per $\text{km}^2$, based on first-order derivatives of the mean ergodic rate w.r.t. the BS or RIS density\footnote{Note that the RIS density derivative is defined per cluster. To enable a fair comparison between the impact of adding one RIS or one BS per km${}^2$, we scaled the derivatives accordingly.}. 
This figure shows in the first place where the expected ergodic rate gain is positive and where it is negative.
Note that both blue curves have a downward trend, suggesting a diminishing return on investment.
A higher BS density means not only a stronger signal strength but also a higher interference power level. 
Indeed, an excessive investment allocation on BSs ultimately leads to negative returns~\cite{alammouri2017sinr}. 
In scenarios with high BS density, implying excessive interference, the investment in RIS emerges as the only solution, since the ergodic rate would degrade when deploying extra BSs, aligning with the investment decision in Fig.~\ref{fig:decision_map}. 
Furthermore, the red curve represents the ratio of ergodic rate gains between deploying BSs or RISs. 
This characterizes the technological gain ratio between investing in an additional BS or RIS. 
An informed economic decision is made by comparing this ratio with the cost ratio coefficient. 
For example, in a scenario with BS density of $10^{-5}$/m${}^2$, it is more cost-effective to invest in RISs when the cost ratio exceeds 3, implying that deploying three RISs offers a better performance improvement than a BS with a lower TCO. 

\subsection{Evaluating the Impact of Coverage Hole Mitigation}
Recall that $C_{\rm H}$ scales this distance in Eq.~\eqref{eq:cell_center}.
To ensure the RIS ring is deployed mostly within the cell, we select $C_{\rm H}$ such that the average distance from the coverage hole to the BS is 80 meters\footnote{For two-dimensional PPP-modeled cellular networks, the average distance between any point and its nearest BS is $1/(2\sqrt{\lambda_{\rm BS}})$, approximately 150 meters for 
$\lambda_{\rm BS}=10^{-5}/{\rm m}^2$.} for middle-range coverage holes when $\lambda_{\rm BS}=10^{-5}/{\rm m}^2$.
We will focus on examining how the direct link blockage penalty influences the deployment strategy. 
Recall that the expected ergodic rate gain ratio (Fig.~\ref{first_order_improve}) is introduced to assess technological preference compared with economic cost. 
We plot this ratio for the coverage hole scenario to provide insights into deployment decisions, which in turn lead to an investment policy.

\begin{figure}
    \centering
    \includegraphics[width=0.85\linewidth]{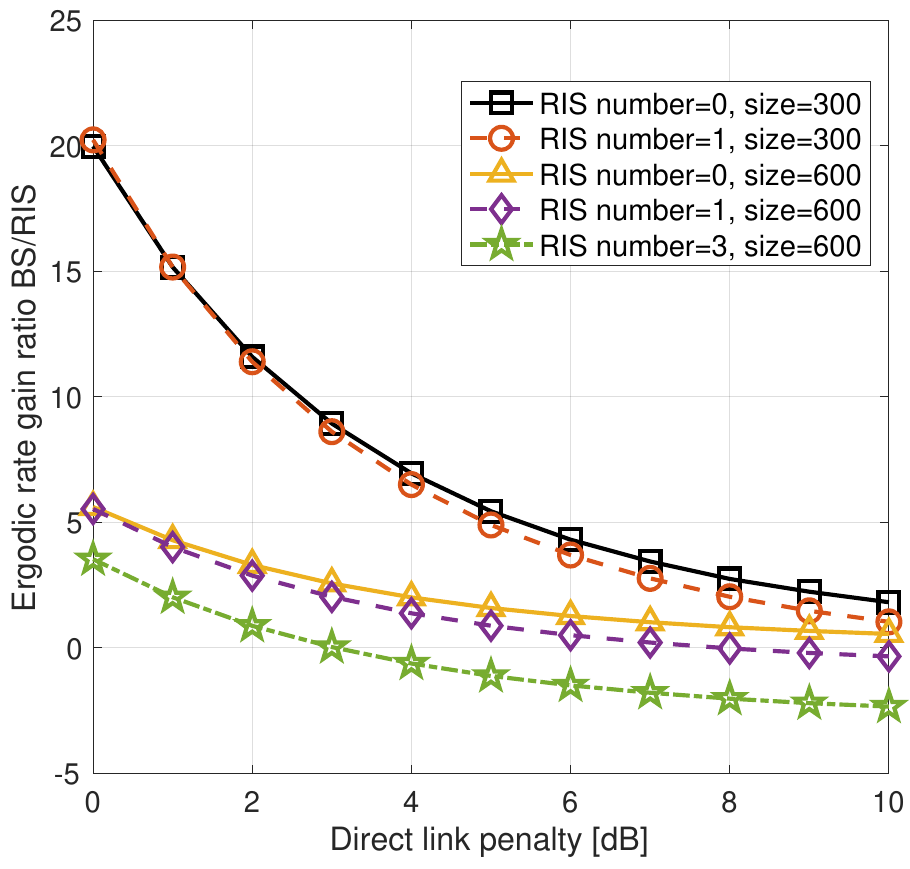}
    \caption{Expected ergodic rate gain ratio between adding new BS or RIS as a function of direct link penalty}
    \label{fig:gain_coverage_penalty}
    \vspace{-0.6cm}
\end{figure}

Fig.~\ref{fig:gain_coverage_penalty} shows the expected gain ratio as a function of the direct link penalty $K$, which takes values from $0$ to $10$ dB. 
We consider the parameters including the size of RISs and the number of existing RISs in each cell. 
In general, these curves in Fig.~\ref{fig:gain_coverage_penalty} decrease, suggesting that, when the direct link to the UE in the coverage hole is severely degraded, the preference for the deployment of BS over RIS decreases. 
Fig.~\ref{fig:gain_coverage_penalty} shows that a moderate cost ratio (e.g., 5) could favor deploying large RISs (600 elements), but may not incentivize deploying small RISs (300 elements) unless the direct link is severely penalized. 
As observed in the curves marked with circles, diamonds, and stars, when BSs are already equipped with a certain number of RISs per cluster, increasing the BS density becomes less favorable since it implies the cost of adding new RISs in the area. 
In the specific case where each BS is configured with an average of three RISs, as shown by the curve marked by green stars, deploying a new BS equipped with three RISs may be less advantageous than allocating these three RISs to existing BSs, given that the expected gain ratio is mostly negative.

\begin{figure}
    \centering
    \includegraphics[width=0.85\linewidth]{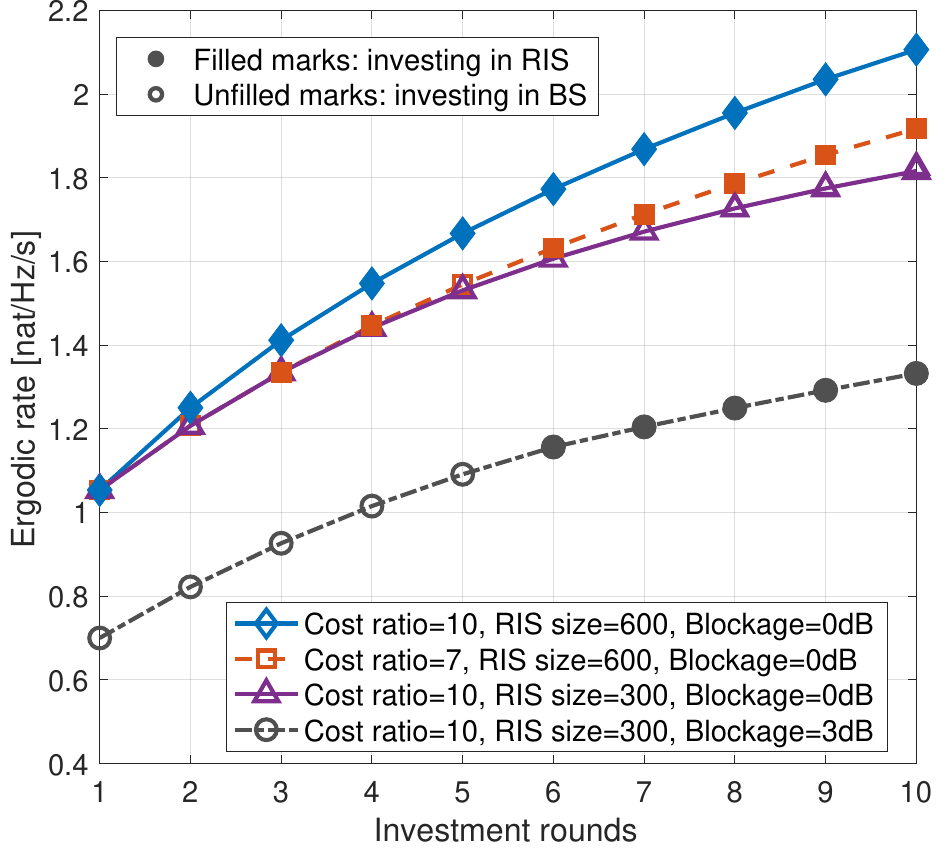}
    \caption{Ergodic rate evolution for coverage hole mitigation scenario based on investment in either BSs or RISs}
    \label{fig:policy_coverage_hole}
    \vspace{-0.6cm}
\end{figure}
The successive investment plan is plotted in Fig.~\ref{fig:policy_coverage_hole}. 
The scenario starts with a BS density of $5/{\rm km}^2$ without any RISs. In each investment round, one can either deploy a BS or several RISs according to the cost ratio $\bar{C}_{\rm BS}/\bar{C}_{\rm RIS}$. 
As in Fig.~\ref{fig:decision_map}, filled marks indicate RIS investment and unfilled marks denote BS investment in each round.
Without considering blockage, the performance improvement due to RIS is compared with the purple curve marked with triangles,  which solely invests in BSs. This is because deploying 300 elements RIS cannot outperform adding BSs with the same investment.
Different branches emerge for varying cost ratios and investment decisions when the RIS is equipped with a significant number of elements for beamforming (600 elements).
When the cost ratio is 7, as depicted by the curve with square marks, the optimal strategy involves alternating investments between BSs and RISs, resulting in a modest performance enhancement.
When RISs are both large (600 elements) and cost-efficient ($\bar{C}_{\rm BS}/\bar{C}_{\rm RIS}=10$), deploying RIS is always preferable, as shown by the blue curve marked with diamonds.
Moreover,  Fig.~\ref{fig:policy_coverage_hole} shows that the ergodic rate is significantly influenced by the blockage. 
The black curve, representing a 3dB penalty, is distinctly separated from the other curves that have no penalty. 
This scenario also turns into investing in RIS when the BS density exceeds a certain threshold.
\vspace{-0.1cm}

\section{Conclusion}\label{sec:conclusion}

A well-designed RIS deployment strategy is key to successful cellular network densification. 
We proposed a first techno-economic analysis for optimizing network investments by modeling the RIS-assisted cellular network using stochastic geometry and considering the relative costs of deploying BSs and RISs.
A sensitivity analysis is conducted to assess the impact of parameter changes, including the density
of BSs and RIS, and their unit cost, on the system's ergodic rate for both throughput enhancement and coverage hole mitigation scenarios. 
Analytical expressions are derived to devise an optimal deployment strategy based on techno-economic considerations. 
Numerical results illustrate how investment decisions, determined by cellular network conditions, impact the evolution of ergodic rates for both scenarios.
This work provides insights into techno-economic factors for operators to maximize RoI when deploying RIS-assisted cellular networks.
\section*{Acknowledgement}
The work of F. Baccelli and G. Sun was supported by the European
Research Council project titled NEMO under grant ERC 788851.
The work of F. Baccelli was also supported by the French National Agency
for Research
project titled France 2030 PEPR r\'{e}seaux du Futur under grant
ANR-22-PEFT-0010.

\bibliographystyle{ieeetr}
\vspace{-0.1cm}
\bibliography{reference}

\begin{thebibliography}{10}

\bibitem{xu2020understanding}
D.~Xu, A.~Zhou, X.~Zhang, G.~Wang, X.~Liu, C.~An, Y.~Shi, L.~Liu, and H.~Ma, ``Understanding {O}perational 5{G}: {A} {F}irst {M}easurement {S}tudy on its {C}overage, {P}erformance and {E}nergy {C}onsumption,'' in {\em Proceedings of the Annual conference of the ACM Special Interest Group on Data Communication on the {A}pplications, Technologies, Architectures, and Protocols for Computer Communication}, pp.~479--494, 2020.

\bibitem{liu2022path}
R.~Liu, Q.~Wu, M.~Di~Renzo, and Y.~Yuan, ``A {P}ath to {S}mart {R}adio {E}nvironments: {A}n {I}ndustrial {V}iewpoint on {R}econfigurable {I}ntelligent {S}urfaces,'' {\em IEEE Wireless Communications}, vol.~29, no.~1, pp.~202--208, 2022.

\bibitem{wang2021joint}
J.~Wang, Y.-C. Liang, J.~Joung, X.~Yuan, and X.~Wang, ``Joint {B}eamforming and {R}econfigurable {I}ntelligent {S}urface {D}esign for {T}wo-{W}ay {R}elay {N}etworks,'' {\em IEEE Transactions on Communications}, vol.~69, no.~8, pp.~5620--5633, 2021.

\bibitem{lang2009business}
E.~Lang, S.~Redana, and B.~Raaf, ``Business {I}mpact of {R}elay {D}eployment for {C}overage {E}xtension in {3GPP} {LTE-A}dvanced,'' in {\em 2009 IEEE international conference on communications workshops}, pp.~1--5, IEEE, 2009.

\bibitem{baccelli2010stochastic}
F.~Baccelli, B.~B{\l}aszczyszyn, {\em et~al.}, ``Stochastic {G}eometry and {W}ireless {N}etworks: {V}olume {II} {A}pplications,'' {\em Foundations and Trends{\textregistered} in Networking}, vol.~4, no.~1--2, pp.~1--312, 2010.

\bibitem{lyu2021hybrid}
J.~Lyu and R.~Zhang, ``Hybrid {A}ctive/{P}assive {W}ireless {N}etwork {A}ided by {I}ntelligent {R}eflecting {S}urface: {S}ystem {M}odeling and {P}erformance {A}nalysis,'' {\em IEEE Transactions on Wireless Communications}, vol.~20, no.~11, pp.~7196--7212, 2021.

\bibitem{wang2023performance}
T.~Wang, G.~Chen, M.-A. Badiu, and J.~P. Coon, ``Performance {A}nalysis of {RIS}-{A}ssisted {L}arge-{S}cale {W}ireless {N}etworks {U}sing {S}tochastic {G}eometry,'' {\em IEEE Transactions on Wireless Communications}, 2023.

\bibitem{sun2023performance}
G.~Sun, F.~Baccelli, K.~Feng, L.~U. Garcia, and S.~Paris, ``A {S}tochastic {G}eometry {F}ramework for {P}erformance {A}nalysis of {RIS}-assisted {OFDM} {C}ellular {N}etworks,'' {\em arXiv preprint arXiv:2310.06754}, 2023.

\bibitem{andrews2011tractable}
J.~G. Andrews, F.~Baccelli, and R.~K. Ganti, ``A {T}ractable {A}pproach to {C}overage and {R}ate in {C}ellular {N}etworks,'' {\em IEEE Transactions on communications}, vol.~59, no.~11, pp.~3122--3134, 2011.

\bibitem{baccelli2020random}
F.~Baccelli, B.~B{\l}aszczyszyn, and M.~Karray, ``Random {M}easures, {P}oint {P}rocesses, and {S}tochastic {G}eometry,'' 2020.

\bibitem{chiu2013stochastic}
S.~N. Chiu, D.~Stoyan, W.~S. Kendall, and J.~Mecke, {\em Stochastic {G}eometry and its {A}pplications}.
\newblock John Wiley \& Sons, 2013.

\bibitem{wu2019intelligent}
Q.~Wu and R.~Zhang, ``Intelligent {R}eflecting {S}urface {E}nhanced {W}ireless {N}etwork via {J}oint {A}ctive and {P}assive {B}eamforming,'' {\em IEEE Transactions on Wireless Communications}, vol.~18, no.~11, pp.~5394--5409, 2019.

\bibitem{zeng2021reconfigurable}
S.~Zeng, H.~Zhang, B.~Di, Y.~Tan, Z.~Han, H.~V. Poor, and L.~Song, ``Reconfigurable {I}ntelligent {S}urfaces in 6{G}: {R}eflective, {T}ransmissive, or {B}oth?,'' {\em IEEE Communications Letters}, vol.~25, no.~6, pp.~2063--2067, 2021.

\bibitem{wang2022reconfigurable}
J.~Wang, Y.-C. Liang, Y.~Pei, and X.~Shen, ``Reconfigurable {I}ntelligent {S}urface as a {M}icro {B}ase {S}tation: {A} {N}ovel {P}aradigm for {S}mall {C}ell {N}etworks,'' {\em IEEE Transactions on Wireless Communications}, vol.~22, no.~4, pp.~2338--2351, 2022.

\bibitem{bogachev2007measure}
V.~I. Bogachev and M.~A.~S. Ruas, {\em Measure {T}heory}, vol.~1.
\newblock Springer, 2007.

\bibitem{alammouri2017sinr}
A.~AlAmmouri, J.~G. Andrews, and F.~Baccelli, ``{SINR} and {T}hroughput of {D}ense {C}ellular {N}etworks with {S}tretched {E}xponential {P}ath {L}oss,'' {\em IEEE Transactions on Wireless Communications}, vol.~17, no.~2, pp.~1147--1160, 2017.

\bibitem{han2020half}
H.~Han, Y.~Liu, and L.~Zhang, ``On {H}alf-power {B}eamwidth of {I}ntelligent {R}eflecting {S}urface,'' {\em IEEE Communications Letters}, vol.~25, no.~4, pp.~1333--1337, 2020.

\bibitem{temme2003large}
N.~M. Temme, ``Large {P}arameter {C}ases of the {G}auss {H}ypergeometric {F}unction,'' {\em Journal of Computational and Applied Mathematics}, vol.~153, no.~1-2, pp.~441--462, 2003.

\bibitem{lukacs1970characteristic}
E.~Lukacs, ``Characteristic {F}unctions,'' 1970.

\bibitem{gakhov2014boundary}
F.~D. Gakhov, {\em Boundary {V}alue {P}roblems}.
\newblock Elsevier, 2014.

\end{thebibliography}

\newpage
\appendix
\subsection{Interchange of Differentiation and Integrals}\label{sec:appendix}
In this technical appendix, we compute the derivative of the ergodic rate $\tau$ in terms of $\lambda_{\rm BS}$ and $\lambda_{\rm RIS}$ in the multiple integral, given by
\begin{equation}\label{eq:background}
\begin{aligned}
	\tau = &\int_{0}^{\infty} \int_{R_c}^{\infty} \bigg( \frac{1}{2\imath \pi} \int_{-\infty}^{\infty} \Big(\mathcal{B}_{\Upsilon}(s-\imath u)- \mathcal{B}_{\Upsilon}(-\imath u) \Big)\frac{{\rm d} u}{ u} + \\ &  \frac{1+\mathcal{B}_{\Upsilon}(s)}{2}\bigg) 
 2\pi \lambda_{\rm BS} r e^{-\pi r^2\lambda_{\rm BS}} \frac{1}{1+t} {\rm d}r {\rm d}t,   
\end{aligned}
\end{equation}
where $s=\frac{1}{P_0g(r)}$ and the Laplace transform of $\Upsilon$ depends on $\lambda_{\rm BS}$ and $\lambda_{\rm RIS}$. 
Recall that $\Upsilon = T(I+\sigma^2)-S_R$ represents the total signal, interference, and Gaussian noise, where $I$ and $S_R$ are both shot noise fields, accounting for fadings across all links and pathloss. 
Given the well-defined distribution functions of both interference and reflected signals as shot noise fields, the probability density function (PDF) $f_{\Upsilon}(t)$ is also well-defined.
The Laplace transform is computed by $\mathcal{B}_{\Upsilon}(s) = \mathcal{B}_{I}(sT)\mathcal{B}_{\sigma^2}(sT) \mathcal{B}_{-S_R}(s)$, given in Eq.~\eqref{eq:laplace_I} and Eq.~\eqref{eq:laplace_SR} 

Prior to interchanging differentiation and integration, we note that the analysis about Eq.~\eqref{eq:background} is meaningful only when the Laplace transform $\mathcal{B}_{\Upsilon}(s)$ is well-defined. 
The following lemma holds: 
\begin{lemma}\label{le:convergence}
For $\mathcal{B}_{\Upsilon}(s)$, the region of convergence is a vertical strip in the complex plane $\Re{s}\in (s_a, s_b)$, where $s_a< 0< \frac{1}{P_0g(r)} < s_b$.
Here,  $s_a<0$ corresponds to the fading of the interference, while $s_b>0$ is associated with the fading of the reflected beam, and their values depend implicitly on the parameters about $\Upsilon$ in Eq.\eqref{eq:laplace_I} and \eqref{eq:laplace_SR}, such as the transmission power $P_0$, the pathloss function $g(r)$, the geometry of RIS deployment area, and the size of RISs. 
\end{lemma}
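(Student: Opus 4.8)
The plan is to build on the factorization $\mathcal{B}_{\Upsilon}(s)=\mathcal{B}_{I}(sT)\,\mathcal{B}_{\sigma^2}(sT)\,\mathcal{B}_{-S_R}(s)$ recorded above, together with the standard principle that, for independent summands, the region of convergence (ROC) of the bilateral Laplace transform of the sum is the intersection of the ROCs of the factors. Since each factor is the moment generating function of a real random variable, each ROC is a vertical strip (possibly a half-plane or the whole plane), and the intersection of vertical strips is again a vertical strip; this already yields the qualitative shape claimed. It then remains to (i) locate the boundaries $s_a$ and $s_b$, (ii) show $s_a<0<s_b$, and (iii) verify that the evaluation point $1/(P_0 g(r))$ used in \eqref{eq:cov_prob} is interior to the strip.

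For the interference factor $\mathcal{B}_{I}(sT)=\mathbb{E}[e^{-sTI}]$, note that $I\ge 0$ forces convergence for all $\Re(s)\ge 0$, so only a left boundary can arise. Reading off \eqref{eq:laplace_I}, the exponent is controlled by $\mathcal{L}_{|\rho_{D}|^2}(sTP_0 g(x))=1/(1+sTP_0 g(x))$, the transform of the exponential (Rayleigh-power) fading, whose ROC is $\Re(w)>-1$. Substituting $w=sTP_0 g(x)$ and using that $g$ is decreasing, the binding constraint is at the nearest admissible interferer $x=r$, giving $s_a=-1/(TP_0 g(r))<0$. I would additionally check that the exponent integral itself is finite for every $\Re(s)>s_a$: the integrand stays bounded on $[r,\infty)$ because the denominator is bounded away from zero there, and it decays like $x^{-\alpha}$ at infinity, so $\int_r^\infty x\,x^{-\alpha}\,\mathrm{d}x<\infty$ precisely because $\alpha>2$. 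Hence $\mathcal{B}_{I}(sT)$ is analytic on $\Re(s)>s_a$ with $s_a<0$.

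For the reflected-signal factor $\mathcal{B}_{-S_R}(s)=\mathbb{E}[e^{sS_R}]$, the nonnegativity $S_R\ge 0$ forces convergence for all $\Re(s)\le 0$, so only a right boundary can arise. From \eqref{eq:laplace_SR} the relevant object is $\mathcal{L}_{|\rho_R|^2}(-sP_0 G(r,y,\psi))$, whose ROC is $w>-1/(2M\mathbb{V}[|\zeta|])$ by \eqref{Laplace_transform_rho_R}. Setting $w=-sP_0 G(r,y,\psi)$ yields $\Re(s)<1/(2M\mathbb{V}[|\zeta|]P_0 G(r,y,\psi))$, and the tightest such bound is attained at the largest reflected pathloss $G_{\max}=\max_{y\in[R_{\rm in},R_{\rm out}],\,\psi}G(r,y,\psi)$, producing $s_b=1/(2M\mathbb{V}[|\zeta|]P_0 G_{\max})>0$. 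Because $\mathcal{B}_{\sigma^2}(sT)=e^{-sT\sigma^2}$ is entire and contributes no constraint, intersecting the three ROCs gives the strip $\Re(s)\in(s_a,s_b)$ with $s_a<0<s_b$, and the dependence of the boundaries on $P_0$, $g$, the ring geometry, and the RIS size $M$ is exactly as asserted.

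I expect the final ordering $1/(P_0 g(r))<s_b$ to be the main obstacle, since it asserts that the very point at which \eqref{eq:cov_prob} is evaluated lies strictly inside the strip rather than on or beyond its right edge. Unwinding $s_b$, it is equivalent to $2M\mathbb{V}[|\zeta|]\,g(y)g(d_2)<g(r)$ along the strongest reflected path, where $d_2=\sqrt{r^2+y^2-2ry\cos\psi}$ is the second segment; intuitively, the reflected power collected through the $M$-element beam must remain below the direct-path reference at the evaluation point, which is not automatic. I would make it precise by bounding $G_{\max}$ through the monotonicity of $g$ (minimizing $y$ to $R_{\rm in}$) and the triangle-type lower bound $d_2\ge|r-y|$, then stating the resulting sufficient condition relating $M$, $\mathbb{V}[|\zeta|]$, and the radii $R_{\rm in},R_{\rm out}$ under which the strict inequality holds for all relevant $r$. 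Establishing this not only certifies that $\mathcal{B}_{\Upsilon}(s)$ is well-defined and holomorphic on a neighborhood of the evaluation point, but also furnishes the analyticity that the subsequent differentiation-under-the-integral arguments rely on.
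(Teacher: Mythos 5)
Your proposal is sound, and it is worth noting up front that the paper does not prove this lemma in-line: its ``proof'' is a pointer to Appendix~B of the cited reference \cite{sun2023performance}. Your argument is therefore a self-contained reconstruction rather than a deviation from the paper, and it follows the natural route: factor $\mathcal{B}_{\Upsilon}(s)=\mathcal{B}_{I}(sT)\mathcal{B}_{\sigma^2}(sT)\mathcal{B}_{-S_R}(s)$, use that the ROC of each factor is a half-plane (right half-plane for the nonnegative variable $I$, left half-plane for $-S_R\le 0$, the whole plane for the deterministic noise term), and intersect. Your identification of the boundaries is consistent with Eqs.~\eqref{eq:laplace_I}--\eqref{Laplace_transform_rho_R}: the left edge $s_a=-1/(TP_0g(r))$ comes from the pole of $1/(1+sTP_0g(x))$ at the nearest admissible interferer $x=r$ (equivalently, from the exponential moments of the exponential fading mark scaled by the largest pathloss value on $[r,\infty)$), and the right edge comes from the convergence condition $s>-1/(2M\mathbb{V}[|\zeta|])$ of the noncentral-$\chi^2$ transform evaluated at $-sP_0G$, maximized over the ring. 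Your check that the exponent integral converges for $\alpha>2$ is also a necessary ingredient that is easy to omit.

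The one genuine caveat is the inequality $1/(P_0g(r))<s_b$, which you correctly single out as the crux. As you observe, unwinding $s_b$ reduces it to $2M\mathbb{V}[|\zeta|]\,G_{\max}<g(r)$ with $G_{\max}=g(R_{\rm in})\,g(\inf|r-y|)$ after using monotonicity of $g$ and $d_2\ge|r-y|$; this is a parameter condition (on $M$, $\mathbb{V}[|\zeta|]$, $\beta$, $R_{\rm in}$, $R_{\rm out}$, $R_c$), not a theorem. The lemma as stated asserts the ordering unconditionally, implicitly assuming the regime of the cited reference, so your plan proves a conditional version. To close the gap you should either state the explicit sufficient condition and verify it holds for the parameter ranges used in Section~\ref{sec:simulation} (where $g(R_{\rm in})=\beta(1+R_{\rm in})^{-\alpha}$ is very small, so the condition is comfortably satisfied), or note that in the coverage-hole variant the evaluation point becomes $1/(KP_0g(r_{\rm H}))$ and the same condition must be rechecked with the blockage penalty included. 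With that condition made explicit, your argument is complete and in fact more informative than the citation the paper provides.
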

\begin{proof}
	See \cite[Appendix~B]{sun2023performance}. 
\end{proof}
In addition, we have the following lemma for the PDF $f_{\Upsilon}(t)$ of the random variable $\Upsilon$: 
\begin{lemma}\label{lemma:smoothness}
 The density function
$f_{\Upsilon}(t)$ and its derivative $\frac{\partial f_{\Upsilon}(t)}{\partial t}$ are continuous and bounded, i.e., $\sup_{t}|f_{\Upsilon}(t)|<\infty$, $\sup_{t}\big|\frac{\partial f_{\Upsilon}(t)}{\partial t}\big|<\infty$. 
\end{lemma}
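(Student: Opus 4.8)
The plan is to recover the density $f_\Upsilon$ from the characteristic function of $\Upsilon$ by Fourier inversion, and to extract the regularity and the two bounds from a single decay estimate on that characteristic function. Since Lemma~\ref{le:convergence} places the imaginary axis strictly inside the strip of convergence of $\mathcal{B}_\Upsilon$, the characteristic function $\phi_\Upsilon(u) = \mathcal{B}_\Upsilon(-\imath u) = \mathbb{E}[e^{\imath u \Upsilon}]$ is well defined for every real $u$. The Fourier inversion theorem then says that if $\phi_\Upsilon \in L^1(\mathbb{R})$, the density exists and equals $f_\Upsilon(t) = \frac{1}{2\pi}\int_{-\infty}^{\infty} e^{-\imath u t}\phi_\Upsilon(u)\,{\rm d}u$, which is automatically continuous and bounded by $\frac{1}{2\pi}\int|\phi_\Upsilon(u)|\,{\rm d}u$. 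If moreover $u\mapsto u\,\phi_\Upsilon(u)$ lies in $L^1(\mathbb{R})$, differentiation under the integral sign is justified and gives $\frac{\partial f_\Upsilon(t)}{\partial t} = \frac{1}{2\pi}\int_{-\infty}^{\infty} (-\imath u)e^{-\imath u t}\phi_\Upsilon(u)\,{\rm d}u$, again continuous and bounded by $\frac{1}{2\pi}\int|u|\,|\phi_\Upsilon(u)|\,{\rm d}u$. Thus the whole lemma reduces to the single estimate $\int_{-\infty}^{\infty}(1+|u|)\,|\phi_\Upsilon(u)|\,{\rm d}u < \infty$.

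To establish that estimate I would factor $\phi_\Upsilon(u) = \mathcal{B}_I(-\imath u T)\,\mathcal{B}_{\sigma^2}(-\imath u T)\,\mathcal{B}_{-S_R}(-\imath u)$ and bound each factor. The noise factor $\mathcal{B}_{\sigma^2}(-\imath u T) = e^{\imath u T \sigma^2}$ has modulus $1$, so it is harmless. For the reflected-signal factor, taking the real part of the exponent in Eq.~\eqref{eq:laplace_SR} gives $|\mathcal{B}_{-S_R}(-\imath u)| = \exp(-\lambda_{\rm RIS}\int\int y(1-\Re\,\mathcal{L}_{|\rho_R|^2}(\imath u P_0 G))\,{\rm d}\psi\,{\rm d}y)$; since $|\mathcal{L}_{|\rho_R|^2}(\imath v)|\le 1$ and $y$ ranges over the bounded ring $[R_{\rm in},R_{\rm out}]$, this factor stays between a positive constant and $1$, hence it too is harmless. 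All the decay must therefore come from the interference factor $\mathcal{B}_I$.

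The crux is the stretched-exponential decay of $|\mathcal{B}_I(-\imath u T)|$. Using $\mathcal{L}_{|\rho_D|^2}(sTP_0 g(x)) = (1+sTP_0 g(x))^{-1}$ and evaluating at $s=-\imath u$, the real part of the exponent in Eq.~\eqref{eq:laplace_I} yields
\begin{equation}
|\mathcal{B}_I(-\imath u T)| = \exp\Big(-2\pi\lambda_{\rm BS}\int_r^{\infty} x\,\frac{(uTP_0 g(x))^2}{1+(uTP_0 g(x))^2}\,{\rm d}x\Big).
\end{equation}
With $g(x)=\beta(x+1)^{-\alpha}$ a scaling argument shows the integrand is $\approx x$ on the range $x \lesssim (uTP_0\beta)^{1/\alpha}$ and decays beyond it, so the exponent grows like a positive constant times $|u|^{2/\alpha}$ as $|u|\to\infty$. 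Hence $|\mathcal{B}_I(-\imath u T)| \le C_1\exp(-C_2|u|^{2/\alpha})$ for constants $C_1,C_2>0$, where I use $\alpha>2$ to keep $2/\alpha\in(0,1)$ with $C_2>0$. Combining the three factors gives $|\phi_\Upsilon(u)| \le C_1\exp(-C_2|u|^{2/\alpha})$, and since a stretched exponential dominates every polynomial, $\int(1+|u|)\,|\phi_\Upsilon(u)|\,{\rm d}u<\infty$, completing the proof through the inversion argument above.

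I expect the main obstacle to be making the scaling estimate for the interference exponent fully rigorous, namely showing that $C_2$ is genuinely positive and that the crossover tail contributes only lower-order terms, uniformly over the fixed parameters. A clean route is the elementary bound $\frac{a^2}{1+a^2}\ge \frac{1}{2}\mathbf{1}\{a\ge 1\}$ with $a=uTP_0 g(x)$, which lower-bounds the exponent by a constant multiple of $\int_r^{x_u} x\,{\rm d}x$, where $x_u$ solves $g(x_u)=1/(uTP_0)$ and scales as $|u|^{1/\alpha}$; this exposes the $|u|^{2/\alpha}$ growth without evaluating the integral exactly, and the remaining verification that $f_\Upsilon$ and $\partial_t f_\Upsilon$ are then continuous follows from dominated convergence in $t$.
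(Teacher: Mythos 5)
Your proposal is correct and follows the same overall strategy as the paper: both arguments reduce the lemma to Fourier inversion of the characteristic function $\mathcal{B}_\Upsilon(\imath u)$, observe that the noise and reflected-signal factors have modulus at most one so that all decay must come from the interference factor, and extract a stretched-exponential bound $|\mathcal{B}_I(\imath uT)|\le C_1 e^{-C_2|u|^{2/\alpha}}$ from the real part of the exponent in Eq.~\eqref{eq:laplace_I}, which makes both $\mathcal{B}_\Upsilon(\imath u)$ and $u\,\mathcal{B}_\Upsilon(\imath u)$ absolutely integrable. The one genuine difference is how the $|u|^{2/\alpha}$ growth of the exponent is established: the paper evaluates the integral $\int_r^\infty x\,\frac{(uTP_0g(x))^2}{1+(uTP_0g(x))^2}\,{\rm d}x$ exactly in terms of Gauss hypergeometric functions and reads off the leading term from their large-argument expansions, whereas you lower-bound the integrand by $\tfrac12\mathbf{1}\{uTP_0g(x)\ge 1\}$ and integrate $x$ up to the crossover radius $x_u\sim(uTP_0\beta)^{1/\alpha}$. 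Your route is more elementary and arguably more robust (it needs no special-function asymptotics and only requires $x_u>r$ for $|u|$ large, absorbing small $|u|$ into the constant $C_1$ via $|\mathcal{B}_I|\le 1$); the paper's route yields the precise leading coefficient $\pi\csc(\pi/\alpha)/(2\alpha)$, which is more information than the lemma actually needs. Your handling of the derivative, via differentiation under the inversion integral justified by $u\,\phi_\Upsilon\in L^1$, is if anything cleaner than the paper's phrasing, which presents $\imath u\,\mathcal{B}_\Upsilon(\imath u)$ as the transform of $\partial_t f_\Upsilon$ before that derivative has been shown to exist.
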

\begin{proof}
We prove the continuity of the density function $f_{\Upsilon}(t)$ through the asymptotic behaviour of the characteristic function $\mathcal{B}_{\Upsilon}(\imath u)$ as $u\rightarrow \pm \infty$. 
We focus on the exponent of Eq.~\eqref{eq:laplace_I}, which can be expressed by its real and imaginary parts: 
\begin{equation}\label{integral_expansion}
\begin{aligned}
	&-2\pi \lambda_{\rm BS} \int_{r}^{\infty} x \big(1-\mathcal{L}_{|\rho_{D}|^2}(\imath u T P_0 g(x))\big){\rm d}x\\
	\stackrel{(a)}{=}& -2\pi \lambda_{\rm BS}\int_{r}^{\infty}
	x\cdot \frac{-\imath u T P_0 g(x)}{1-\imath u T P_0 g(x)}{\rm d}x\\
	=& -2\pi \lambda_{\rm BS}\int_{r}^{\infty}
	x\cdot \bigg( \frac{-\imath u T P_0 g(x)+\big(u T P_0 g(x)\big)^2}{1+ \big(u T P_0 g(x)\big)^2} \bigg){\rm d}x \\
	=& -2\pi \lambda_{\rm BS}\int_{r}^{\infty}
	x\cdot  \frac{\big(u T P_0 g(x)\big)^2}{1+ \big(u T P_0 g(x)\big)^2} {\rm d}x + \\& \qquad \imath \bigg( 2\pi \lambda_{\rm BS}\int_{r}^{\infty}
	x\cdot  \frac{u T P_0 g(x)}{1+ \big(u T P_0 g(x)\big)^2} {\rm d}x \bigg), 
\end{aligned}
\end{equation}
where $(a)$ follows from the characteristic function of an exponential random variable, $\mathcal{L}_{|\rho_{D}|^2}(\imath u T P_0 g(x)) = \frac{1}{1-\imath u T P_0 g(x)}$.
For an exponential function, the real component of the exponent governs its modulus, while the imaginary component determines its oscillatory frequency. 
Consequently, for studying the decay behavior of Eq.~\eqref{eq:laplace_I}, it is sufficient to focus on the real part in Eq.~\eqref{integral_expansion}.
Recall that we adopt a pathloss function $g(x)=\beta(1+x)^{-\alpha}$, where $\beta$ represents the antenna gain for normalization. The real part of interest can be further written as 
\begin{equation}
\begin{aligned}
	&\int_{r}^{\infty}
	x\cdot  \frac{\big(u T P_0 g(x)\big)^2}{1+ \big(u T P_0 g(x)\big)^2} {\rm d}x \\
    =& \int_{r}^{\infty}x\cdot  \frac{\big(u T P_0 \beta(1+x)^{-\alpha}\big)^2}{1+ \big(u T P_0 \beta(1+x)^{-\alpha}\big)^2} {\rm d}x \\
	=& \int_{r}^{\infty}(x+1)\cdot  \frac{\big(u T P_0 \beta(1+x)^{-\alpha}\big)^2}{1+ \big(u T P_0 \beta(1+x)^{-\alpha}\big)^2} {\rm d}x - \\ &\int_{r}^{\infty}  \frac{\big(u T P_0 \beta(1+x)^{-\alpha}\big)^2}{1+ \big(u T P_0 \beta(1+x)^{-\alpha}\big)^2} {\rm d}x \\
	=& \int_{r+1}^{\infty}x\cdot  \frac{\big(u T P_0 \beta x^{-\alpha}\big)^2}{1+ \big(u T P_0 \beta x^{-\alpha}\big)^2} {\rm d}x - \int_{r+1}^{\infty}\frac{\big(u T P_0 \beta x^{-\alpha}\big)^2}{1+ \big(u T P_0 \beta x^{-\alpha}\big)^2} {\rm d}x.
\end{aligned}
\end{equation}
These two integrals can be expressed by Hypergeometric functions, given by
\begin{equation}\label{label:hypergeometric}
\begin{aligned}
&\int_{r+1}^{\infty}
	x\cdot  \frac{\big(u T P_0 \beta x^{-\alpha}\big)^2}{1+ \big(u T P_0 \beta x^{-\alpha}\big)^2} {\rm d}x =    \frac{(u T P_0 \beta)^2 (r+1)^{2-2 \alpha} }{2 (\alpha-1)} \\ &\qquad \, _2F_1\left(1,\frac{\alpha-1}{\alpha};2-\frac{1}{\alpha};-(r+1)^{-2 \alpha} (u T P_0 \beta)^2\right),\\
	&\int_{r+1}^{\infty}\frac{\big(u T P_0 \beta x^{-\alpha}\big)^2}{1+ \big(u T P_0 \beta x^{-\alpha}\big)^2} {\rm d}x = \frac{(u T P_0 \beta)^2 (r+1)^{1-2 \alpha} }{2 \alpha-1}\\
    & \qquad\, _2F_1\left(1,1-\frac{1}{2 \alpha};2-\frac{1}{2 \alpha};-(r+1)^{-2 \alpha} (u T P_0 \beta)^2\right),
	\end{aligned}
\end{equation}
where $_2F_1(,;;)$ is the Hypergeometric function. 
To identify the asymptotic behaviour as $u\rightarrow \pm \infty$, we expand the Hypergeometric function Eq.~\eqref{label:hypergeometric} in terms of series, given by
\begin{equation}\label{hyper_series}
\begin{aligned}
	&\int_{r+1}^{\infty}
	x\cdot  \frac{\big(u T P_0 \beta x^{-\alpha}\big)^2}{1+ \big(u T P_0 \beta x^{-\alpha}\big)^2} {\rm d}x \approx\Big(-{(r+1)^2/}{2}+O\left(u^{-2}\right)\Big) +\\
   & \qquad (u T P_0 \beta)^{2/\alpha} \left(\frac{\pi  \csc \left(\pi/\alpha\right)}{(2 \alpha)}+O\left(u^{-4}\right)\right),  \\
	&  \int_{r+1}^{\infty}\frac{\big(u T P_0 \beta x^{-\alpha}\big)^2}{1+ \big(u T P_0 \beta x^{-\alpha}\big)^2} {\rm d}x \approx \Big(-(r+1)+O\left(u^{-2}\right)\Big) +\\
    & \qquad (u T P_0 \beta)^{1/\alpha} \left(\frac{\pi  \csc \left(\pi /(2 \alpha)\right)}{ (2 \alpha)}+O\left(u^{-4}\right)\right),
\end{aligned}
\end{equation}
where the cosecant function  $\csc(\pi/\alpha)$ is positive for $\alpha>2$, where $\alpha$ represents the pathloss exponent. 
Since $T, P_0, \beta, r$ are constants, the first integral in Eq.~\eqref{hyper_series} tends to $+\infty$ at a rate of $O(u^{2/\alpha})$ while the second grows at a rate of $O(u^{1/\alpha})$ as  $u\rightarrow \pm \infty$~\cite[Eq.~(9)]{temme2003large}, ensuring that the decay behaviour is dominated by the first integral. 
Including the coefficient $-2\pi\lambda_{\rm RIS}$, the real part of Eq.~\eqref{integral_expansion} tends to $-\infty$ at a rate of $O(u^{2/\alpha})$ as $u\rightarrow \pm \infty$. 
In other words, the characteristic function $\mathcal{B}_{I}(\imath u )$ decays at a rate $O(e^{-u^{2/\alpha}})$ as $u\rightarrow \pm \infty$. 
Because the absolute value of the characteristic function of any probability distribution is bounded by one~[Theorem~2.1.1]\cite{lukacs1970characteristic}, the modulus of $\mathcal{B}_{-S_R}(\imath u)$ and $\mathcal{B}_{\sigma^2}(\imath u )$ must be bounded by one. 
Therefore, we conclude that the decay behavior of the modulus of $\mathcal{B}_{\Upsilon}(\imath u)$ is bounded above by that of $\mathcal{B}_{I}(\imath u)$ as $u\rightarrow \pm \infty$ due to the boundedness of the modulus of $\mathcal{B}_{-S_R}(\imath u)$ and $\mathcal{B}_{\sigma^2}(\imath u )$. 

Given this exponential decay rate, $\mathcal{B}_{\Upsilon}(\imath u)$ is absolutely integrable over $(-\infty, \infty)$.
According to~\cite[Theorem~3.2.2]{lukacs1970characteristic}, a characteristic function that is absolutely integrable over $(-\infty, \infty)$ corresponds to an absolutely continuous distribution function, and the Fourier inversion formula expresses the density function of this distribution in terms of the characteristic function, and this density function is continuous. Consequently, the Fourier inversion formula also implies the boundedness of  $f_{\Upsilon}(t)$
\begin{equation}
\begin{aligned}
	\sup_{t}\big|f_{\Upsilon}(t)\big| = \sup_{t} \left| \int_{-\infty}^{\infty}\mathcal{B}_{\Upsilon}(\imath u) e^{\imath ut}\text{d}u \right| &\leq \int_{-\infty}^{\infty}\left|\mathcal{B}_{\Upsilon}(\imath u) \right| \text{d}u\\ &= B_0< \infty.  
\end{aligned}
\end{equation}
where $B_0$ is the limit of the absolute integral, of which the existence is guaranteed by exponential decay.
Furthermore, given a function $f_{\Upsilon}(t)$ with its Fourier transform $\mathcal{B}_{\Upsilon}(\imath u)$, the Fourier transform of its derivative $\frac{\partial f_{\Upsilon}(t)}{\partial t}$ is $\imath u\mathcal{B}_{\Upsilon}(\imath u)$. 
The fast decay property also ensures that the integral $\int_{-\infty}^{\infty}\big|u\mathcal{B}_{\Upsilon}(\imath u) \big|\text{d}u$ is absolutely integrable. Consequently, by the integrable Fourier inversion transform, we have
\begin{equation}\label{eq:differentiable}
\begin{aligned}
	 \sup_{t} \left|  \frac{\partial f_{\Upsilon}(t)}{\partial t}\right|& = \sup_{t} \left|\int_{-\infty}^{\infty}\imath u\mathcal{B}_{\Upsilon}(\imath u) e^{\imath ut}\text{d}u \right|\\
     &\leq \int_{-\infty}^{\infty}\left|u\mathcal{B}_{\Upsilon}(\imath u) \right|\text{d}u = B_1 < \infty,
\end{aligned}
\end{equation}
where $B_1$ is the limit of the absolute integral of $u\mathcal{B}_{\Upsilon}(\imath u)$, of which the existence is also guaranteed by exponential decay. This is because the asymptotic behaviour of the product of a polynomial function and an exponential function is governed by the exponential factor.
\end{proof}

{As already introduced,
this appendix is concerned with the conditions that allow one to interchange the differentiation operator and the integral signs. }
Here, we have the three integrals:
\begin{itemize}
	\item An ordinary integral with respect to the distance $r$ between the UE and the association BS;
	\item Another ordinary integral with respect to the threshold $t$ defining the  coverage success;
	\item A singular integral with respect to the variable $u$, with $u=0$ as a singularity. The integral is defined in the sense of Cauchy principal value $\int_{-\infty}^{\infty} \frac{1}{u}\text{d}u:= \lim_{\epsilon\rightarrow 0} \int_{-\infty}^{-\epsilon}  \frac{1}{u}\text{d}u  + \int^{\infty}_{\epsilon} \frac{1}{u}\text{d}u$ \cite{gakhov2014boundary}. 
\end{itemize}
Since integration regions in our question are unbounded, we use \cite[Corollary 2.8.7]{bogachev2007measure}, which depends on the Lebesgue dominated convergence theorem and the mean value theorem. We rewrite the assertion about the differentiability of integrals here:
\begin{theorem}\label{thm:conditions}
	Let $\mu$ be a nonnegative measure on a space $X$. Let $f:X\times (a,b)\rightarrow \mathbb{R}$ be a function that satisfies the following three conditions:
	\begin{enumerate}
\item For each $t\in (a,b)$, the function $f(x, t)$ is integrable with respect to $x$;
		\item  $f(x, t)$ is differentiable with respect to $t$ for almost every $x\in X$ (with respect to $\mu$);
		\item There exists an $\mu$-a.e. integrable function $\Xi(x)$ such that for every fixed $t$ we have $\big|\frac{\partial f(x, t)}{\partial t}\big|\leq \Xi(x)$. 	
	\end{enumerate}
	Then we have 
	\begin{equation}
		\frac{\partial}{\partial t}\int_X f(x, t)\mu({\rm d}x) = \int_X \frac{\partial f(x, t)}{\partial t}\mu({\rm d}x). 
	\end{equation}
	\end{theorem}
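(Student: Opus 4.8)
The plan is to realize the derivative as a pointwise limit of difference quotients, to dominate those quotients uniformly by the integrable function $\Xi$, and then to invoke the Lebesgue dominated convergence theorem to pull the limit inside the integral. Fix $t_0 \in (a,b)$ and set $F(t) = \int_X f(x,t)\, \mu({\rm d}x)$, which is well defined by condition~(1). For a sequence $h_n \to 0$ with $t_0 + h_n \in (a,b)$, I would write the difference quotient
\[
\frac{F(t_0 + h_n) - F(t_0)}{h_n} = \int_X \frac{f(x, t_0 + h_n) - f(x, t_0)}{h_n}\, \mu({\rm d}x),
\]
where the interchange of the finite difference and the integral is immediate by linearity. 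Denote the integrand by $g_n(x)$.

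Next I would verify the two hypotheses of dominated convergence for the family $\{g_n\}$. First, condition~(2) guarantees that for $\mu$-almost every $x$ the map $t \mapsto f(x,t)$ is differentiable at $t_0$, so $g_n(x) \to \frac{\partial f}{\partial t}(x, t_0)$ pointwise a.e.; this identifies the candidate limit. Second, the crucial step is the uniform bound: by differentiability of $t \mapsto f(x,t)$ on $(a,b)$ and the mean value theorem, for each such $x$ and each $n$ there exists $\theta_n = \theta_n(x) \in (0,1)$ with
\[
g_n(x) = \frac{\partial f}{\partial t}\big(x,\, t_0 + \theta_n h_n\big),
\]
so that condition~(3) yields $|g_n(x)| \le \Xi(x)$ for all $n$, with $\Xi$ integrable.

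With the pointwise limit identified and the uniform integrable bound in hand, the dominated convergence theorem gives
\[
\lim_{n\to\infty}\int_X g_n(x)\, \mu({\rm d}x) = \int_X \frac{\partial f}{\partial t}(x, t_0)\, \mu({\rm d}x).
\]
Since the sequence $h_n \to 0$ was arbitrary, $F$ is differentiable at $t_0$ with $F'(t_0)$ equal to the right-hand side, which is the claimed identity. I expect the \emph{main obstacle} to be not the convergence theorem itself but securing the $n$-independent dominating bound: this is precisely where the mean value theorem is indispensable, as it converts the pointwise differentiability of condition~(2) together with the single bound of condition~(3) into a bound on \emph{every} difference quotient by the fixed integrable function $\Xi$. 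A minor technical point to check along the way is the $\mu$-measurability of $g_n$ and of the limit $\frac{\partial f}{\partial t}(\cdot, t_0)$; the former holds since each $g_n$ is a difference of integrable (hence measurable) functions, and the latter as an a.e. pointwise limit of measurable functions.
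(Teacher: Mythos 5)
Your proof is correct and is essentially the argument the paper relies on: the paper does not write out a proof but cites this as Corollary~2.8.7 of Bogachev's \emph{Measure Theory}, explicitly noting that it rests on the Lebesgue dominated convergence theorem and the mean value theorem, which are exactly the two ingredients you use (MVT to dominate every difference quotient by the single integrable bound $\Xi$, then DCT to pass to the limit). Your write-up simply fills in the standard details that the paper delegates to the reference.
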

\begin{remark}
	In our case, $t$ can be either $\lambda_{\rm BS}$ or $\lambda_{\rm RIS}$ and $x$ can be the variables for integration. The nonnegative measure $\mu$ is Lebesgue measure and integrability should be understood in the Lebesgue sense.
\end{remark}

Since our aim is to verify whether the integrand functions satisfy the above conditions,   we prove the interchange of integration and differentiation from the inner integral to the outer. In the following, we begin the interchange proof by considering the singular integral and then progress to the two ordinary integrals.  

\subsubsection{Interchange of Differentiation and Singular Integral}

Let the numerator of the integrand of the singular integral be $\Xi_1(s, u, \lambda) := \mathcal{B}_{\Upsilon}(s-\imath u) - \mathcal{B}_{\Upsilon}(-\imath u)$, where $s=\frac{1}{P_0g(r)}$ belongs to the region of convergence of $\mathcal{B}_{\Upsilon}(s)$. 
We use the unspecified variable $\lambda$ to denote either $\lambda_{\rm BS}$ or $\lambda_{\rm RIS}$, since later it will be clear that the analysis applies to both cases. 
In this part, we focus on the behavior of Laplace transforms and singularities, temporarily neglecting the constant factor $\frac{1}{\imath \pi}$. 
Here, we verify the three conditions in Theorem~\ref{thm:conditions}. 
\begin{statement}\label{state_1}
	For every $\lambda$ such that $s$ is within the domain of convergence of the Laplace transform $\mathcal{B}_{\Upsilon}(s)$, the function $\frac{\Xi_1(s, u, \lambda)}{u}$ is integrable w.r.t. Lebesgue measure on $(-\infty, \infty)$. 
\end{statement}
\begin{proof}
Recall that the integral containing the singularity at $u=0$ is defined in the sense of Cauchy principal value.
We first decompose the singular integral $\int_{-\infty}^{\infty}$ into two parts:
\begin{equation}
\begin{aligned} \int_{-\infty}^{\infty}\frac{\Xi_1(s, u, \lambda)}{u} {\rm d}u =& \int_{-1}^{1}\frac{\Xi_1(s, u, \lambda)}{u} {\rm d}u + \\&  \left[\int_{-\infty}^{-1} +
\int_{1}^{\infty}\frac{\Xi_1(s, u, \lambda)}{u} {\rm d}u\right], 
\end{aligned}
\end{equation}
where the first integral is the singular part.

For the first integral, we can express the function within the interval $[-1, 1]$ according to the symmetry as 
\begin{equation}\label{eq:reduce_singularity_to_zero}
\begin{aligned}
    \int_{-1}^{1}\frac{\Xi_1(s, u,\lambda)}{u}{\rm d}u &= \Big( \int_{-1}^{-\epsilon} + \int_{\epsilon}^{1}\Big) \frac{\Xi_1(s, u,\lambda)}{u}{\rm d}u \\ &= \int_{\epsilon}^{1}\frac{\Xi_1(s, u,\lambda)-\Xi_1(s, -u, \lambda)}{u}{\rm d}u.  
\end{aligned}
\end{equation}
Given that the analytic function $\mathcal{B}_{\Upsilon}(s)$ is infinitely differentiable when the Laplace transform is defined within the region of convergence, thus we have $$\lim_{u\rightarrow 0}\frac{\Xi_1(s, u,\lambda)-\Xi_1(s, -u, \lambda)}{u} = 2\cdot \frac{\partial \Xi_1(s, u,\lambda)}{\partial u}.$$ 
This value is finite due to the fact that $s-\imath u$ and $-\imath u$ belong to the domain of convergence. 
In other words, in the region $[0, \epsilon)$ around the singular point, the integrand $\frac{\Xi_1(s, u,\lambda)-\Xi_1(s, -u, \lambda)}{u}$ has a finite value and has a modulus which is bounded above by a constant. 
Thus, $\lim_{\epsilon\rightarrow 0} \int_{0}^{\epsilon}\frac{\Xi_1(s, u,\lambda)-\Xi_1(s, -u, \lambda)}{u}\text{d}u = 0$.
Away from this singular point, the integrand is continuous and bounded and the integration region is finite. 
Hence, $\frac{\Xi_1(s, u,\lambda)-\Xi_1(s, -u, \lambda)}{u}$ is an integrable function defined on the  interval $\lim_{\epsilon\rightarrow0}[\epsilon, 1] = (0, 1]$.

For the second integral, we focus on the decay behavior of $u=\pm \infty$. 
First, reformulate $\Xi_1(s, u,\lambda)$ as
\begin{equation}\label{eq:KUX}
    \Xi_1(s, u,\lambda) = \int_{-\infty}^\infty f_{\Upsilon}(t)(e^{-st}-1)e^{\imath ut} \text{d}t.
\end{equation}
In other words,  $\Xi_1(s, u,\lambda)$ can be seen as the Fourier transform of the function $f_{\Upsilon}(t)(e^{-st}-1)$ with respect to $u$.  
Moreover, the region of convergence of $\mathcal{B}_{\Upsilon}(s)$, as outlined in Lemma~\ref{le:convergence}, is a strip defined by $\Re{s}\in (s_a, s_b)$. 

We introduce two arbitrarily small constant $0<\epsilon_1<\epsilon_2$, and $(s_a+\epsilon_2, s_b-\epsilon_2)\subset [s_a+\epsilon_1, s_b-\epsilon_1]\subset (s_a, s_b)$. 
By $\big|f_{\Upsilon}(t)\big|<\infty$ and the convergence of the Laplace transform for  $s\in [s_a+\epsilon_1, s_b-\epsilon_1]$, we have 
\begin{equation}\label{eq:decay}
\begin{aligned}
&\mathcal{B}_{\Upsilon}(s_a+\epsilon_1)=\int_{-\infty}^{\infty}e^{-(s_a+\epsilon_2) t} f_{\Upsilon}(t) e^{(\epsilon_2-\epsilon_1) t}<\infty \\& \Rightarrow f_{\Upsilon}(t)e^{-(s_a+\epsilon_2)t}=o(e^{-(\epsilon_2-\epsilon_1) t}), \text{ as }t\rightarrow \infty,\\
&\mathcal{B}_{\Upsilon}(s_b-\epsilon_1)=\int_{-\infty}^{\infty}e^{-(s_b-\epsilon_2)t}f_{\Upsilon}(t) e^{(\epsilon_1-\epsilon_2)t}<\infty\\ & \Rightarrow f_{\Upsilon}(t)e^{-(s_b-\epsilon_2)t}=o(e^{(\epsilon_2-\epsilon_1) t}), \text{ as }t\rightarrow -\infty.
\end{aligned}
\end{equation}
By the integrability of $f_{\Upsilon}(t)e^{-st}$ for all $s\in (s_a+\epsilon_1, s_b-\epsilon_1)$ and the boundedness of $f_{\Upsilon}(t)$, as shown in Lemma~\ref{lemma:smoothness}, we have $ f_{\Upsilon}(t)=o\big(e^{(s_a+\epsilon_1)t}\big)$ as $t\rightarrow\infty$, and $f_{\Upsilon}(t)=o\big(e^{(s_b-\epsilon_1)t}\big)$ as $t\rightarrow-\infty$.
If the decay speed is not as fast as in Eq.~\eqref{eq:decay}, but the improper integrals still converge at infinity, the boundedness of $f_{\Upsilon}(t)$ would be violated.

For Eq.~\eqref{eq:KUX} and $s\in [s_a+\epsilon_1, s_b-\epsilon_1]$, $\mathcal{B}_{\Upsilon}(s)< \infty$, so it is possible to apply an integration by parts, 
\begin{equation}\label{different}
\begin{aligned}
	\Xi_1(s, u,\lambda) =& \int_{-\infty}^\infty f_{\Upsilon}(t)(e^{-st}-1)e^{\imath ut} \text{d}t \\=& \Big[f_{\Upsilon}(t)(e^{-st}-1) \cdot \frac{ e^{\imath ut}}{\imath u}\Big]_{-\infty}^{\infty} - \\
    &\int_{-\infty}^\infty \frac{\partial}{\partial t}\Big( f_{\Upsilon}(t)(e^{-st}-1) \Big)\cdot \Big(\frac{ e^{\imath ut}}{\imath u}\Big) \text{d}t \\
	\stackrel{(a)}{=}& 0 - \Big(\frac{1}{ \imath u}\Big) \cdot \int_{-\infty}^\infty \frac{\partial}{\partial t}\Big( f_{\Upsilon}(t)(e^{-st}-1) \Big) e^{\imath ut} \text{d}t.
\end{aligned}
\end{equation}
The step $(a)$ applies to the smaller set $s\in (s_a+\epsilon_2, s_b-\epsilon_2)$,  which ensure the first term to be zero by $f_{\Upsilon}(t)e^{-st}$ decays faster than the growth rate of $e^{-(\epsilon_2-\epsilon_1)t}$ as $t\rightarrow \infty$ and decays faster than $e^{(\epsilon_2-\epsilon_1)t}$ as $t\rightarrow -\infty$, as discussed in Eq.~$\eqref{eq:decay}$. To discuss the absolute integrability of the second term, for $s\neq 0$ (the case $s=0$ is trivial since Eq.~\eqref{different}=0), we can express
\begin{equation}\label{eq:derivative_of_integrand}
\begin{aligned}
	\Big|\frac{\partial}{\partial t}\Big( f_{\Upsilon}(t)(e^{-st}-1) \Big)\Big| &= \Big|\frac{\partial f_{\Upsilon}(t)}{\partial t} (e^{-st}-1) - se^{-st}f_{\Upsilon}(t) \Big| \\
    &\leq \Big|\frac{\partial f_{\Upsilon}(t)}{\partial t} (e^{-st}-1)\Big| + se^{-st}f_{\Upsilon}(t),
\end{aligned}
\end{equation}
where the second term is positive since $s=\frac{1}{P_0g(r)}>0$ so we can drop its absolute value. 
Now we focus on the first term. 
For $s\in [s_a+\epsilon_1,0)\cup(0, s_b-\epsilon_1]$, $\mathcal{B}_{\Upsilon}(s)$ is convergent so it is valid to apply integration by parts, we have 
\begin{equation}\label{eq:derivative_finite}
\begin{aligned}
\mathcal{B}_{\Upsilon}(s)&= \int_{-\infty}^{\infty}f_\Upsilon(t)e^{- st}\text{d}t \\&= \Big[-f_\Upsilon(t)\cdot \frac{1}{ s}e^{- st} \Big]_{-\infty}^{\infty} + \frac{1}{ s} \int_{-\infty}^{\infty}\frac{\partial f_{\Upsilon}(t)}{\partial t} e^{- st}\text{d}t, 
\end{aligned}
\end{equation}
Then, for $s\in (s_a+\epsilon_2,0)\cup(0, s_b-\epsilon_2)$, the first term of Eq.~\eqref{eq:derivative_finite} can be discussed in the same way as the step $(a)$ in Eq.~\eqref{different}, and the value is zero.
Then, the second term of Eq.~\eqref{eq:derivative_finite} is finite and we have $\int_{-\infty}^{\infty}\frac{\partial f_{\Upsilon}(t)}{\partial t} e^{- st}\text{d}t<\infty$.
Recall that by Lemma~\ref{lemma:smoothness}, we have $\frac{\partial f_{\Upsilon}(t)}{\partial t}< \infty$. 
The boundedness and integrability of $\frac{\partial f_{\Upsilon}(t)}{\partial t}$ can establish its decay property based on the same analysis in Eq.~\eqref{eq:decay}, which relies on the same property of $f_{\Upsilon}(t)$. 
That is, the integrability of $\frac{\partial f_{\Upsilon}(t)}{\partial t} e^{-st}$ for $s\in[s_a+\epsilon_1,0)\cup(0, s_b-\epsilon_1]$ allows one to obtain its decay speed for $s\in (s_a+\epsilon_2,0)\cup(0, s_b-\epsilon_2)$, which is faster than $e^{-(\epsilon_2-\epsilon_1)t}$ as $t\rightarrow \infty$ and $e^{(\epsilon_2-\epsilon_1)t}$ as $t\rightarrow -\infty$, respectively. 
The exponential decay speed of $\frac{\partial f_{\Upsilon}(t)}{\partial t} e^{-st}$ at infinity  guarantee it to be absolute integrable, and thus Eq.~\eqref{eq:derivative_of_integrand} is absolute integrable, given by
\begin{equation}
\begin{aligned}
&\int_{-\infty}^{\infty}\Big|\frac{\partial}{\partial t}\Big( f_{\Upsilon}(t)(e^{-st}-1) \Big)\Big| \text{d}t \\&\leq \int_{-\infty}^{\infty}\Big|\frac{\partial f_{\Upsilon}(t)}{\partial t} (e^{-st}-1)\Big|\text{d}t +\frac{1}{s}\mathcal{B}_{\Upsilon}(s) < \infty. 
\end{aligned}
\end{equation}

The absolute integrability of Eq.~\eqref{eq:derivative_of_integrand} implies that
\begin{equation}
	\int_{-\infty}^\infty \frac{\partial}{\partial t}\Big( f_{\Upsilon}(t)(e^{-st}-1) \Big) e^{\imath ut} \text{d}t < \infty.
\end{equation}
Plugging in Eq.~\eqref{different}, we have $|\Xi_1(s, u,\lambda)| = O(\frac{1}{u})$. 
Consequently, $\Big|\frac{\Xi_1(s, u, \lambda)}{u}\Big| = O(\frac{1}{u^2})$ when $u\rightarrow \pm \infty$, thus $\frac{\Xi_1(s, u, \lambda)}{u}$ is bounded by a Lebesgue integrable function in $(-\infty, -1]\cup [1, \infty)$.

In summary, the integrability of $\frac{\Xi_1(s, u,\lambda)}{u}$ is verified by considering both the singularity at $u=0$ and the asymptotic behavior as $u\rightarrow \pm\infty$. 
The singularity at $u=0$  is resolved due to the differentiability of the numerator $\Xi_1(s, u,\lambda)$. 
The asymptotic behavior is guaranteed to decay sufficiently fast whenever $\mathcal{B}_{\Upsilon}(s)$ is defined within the region of convergence.

\end{proof}

\begin{statement}
	Except for the singular point $u=0$, the function $\frac{\Xi_1(s, u, \lambda)}{u}$ is differentiable with respect to $\lambda$.
\end{statement}
\begin{proof}
The set of the singular point $x=\{0\}$ is a null set w.r.t. the Lebesgue measure. Hence the removal of this set does not violate the a.e. qualification of the differentiability.

For all $u\neq 0$, the numerator can be written as 
\begin{equation}\label{eq:first_derivative}
	  \frac{\partial \Xi_1(s, u, \lambda)}{\partial \lambda}  = \mathcal{B}_{\Upsilon}(s-\imath u) \mathcal{D}(s-iu) - \mathcal{B}_{\Upsilon}(-\imath u) \mathcal{D}(-iu), 
\end{equation}
where the abbreviation $\mathcal{D}(s)$ applies for either $\mathcal{D}_{\rm BS}(s)$ or $\mathcal{D}_{\rm RIS}(s)$, respectively. They are derived from Eq.~\eqref{eq:laplace_I} and Eq.~\eqref{eq:laplace_SR}, given by 
\begin{equation}\label{eq:integral}
\begin{aligned}
	\mathcal{D}_{\rm BS}(s):=& -2\pi \int_{r}^{\infty} x \Big(1-\mathcal{L}_{|\rho_{D}|^2}\big(s T P_0 g(x)\big)\Big){\rm d}x, \\
	\mathcal{D}_{\rm RIS}(s):=& -\int_{R_{\rm in}}^{R_{\rm out}} \int_0^{2\pi} y(1 - \mathcal{L}_{|\rho_R|^2}(-s P_0 G(r, y, \psi)  ) ) {\rm d}\psi {\rm d}y. 
\end{aligned}
\end{equation}
The existence of $\mathcal{B}_\Upsilon(s)$ ensures that both $\mathcal{D}_{\rm BS}(s)$ and $\mathcal{D}_{\rm RIS}(s)$ are finite, as the former is essentially the exponential of the latter. The finiteness of the numerator and denominator establishes the differentiability of $\frac{\Xi_1(s, u, \lambda)}{u}$ for $u\neq 0$.
\end{proof}

\begin{lemma}\label{lemma:analytic}
	For $s-\imath u$ within the region of convergence of $\mathcal{B}_{\Upsilon}(s-\imath u)$, $\mathcal{D}_{\rm BS}(s-\imath u)$ and $\mathcal{D}_{\rm RIS}(s-\imath u)$ are differentiable at $u \in 0$.
\end{lemma}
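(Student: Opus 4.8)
The plan is to realize both $\mathcal{D}_{\rm BS}(s-\imath u)$ and $\mathcal{D}_{\rm RIS}(s-\imath u)$ as integrals in the complex argument $w=s-\imath u$ and to differentiate under the integral sign at $u=0$ by invoking Theorem~\ref{thm:conditions}. The structural observation is that, in the definitions \eqref{eq:integral}, the maps $w\mapsto \mathcal{L}_{|\rho_D|^2}(wTP_0g(x))$ and $w\mapsto \mathcal{L}_{|\rho_R|^2}(-wP_0G(r,y,\psi))$ are holomorphic — hence smooth in $u$ via $\partial_u=-\imath\,\partial_w$ — for each fixed $x$ (resp. $y,\psi$) as long as $w$ stays inside the convergence strip $\Re w\in(s_a,s_b)$ furnished by Lemma~\ref{le:convergence}. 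Since the base point $w=s=\tfrac{1}{P_0g(r)}$ lies strictly inside this strip, I would fix a small closed neighborhood $N$ of $u=0$ so that $s-\imath u$ remains in a compact subset $K\subset\{\Re w\in(s_a,s_b)\}$ for all $u\in N$, and then run the domination argument uniformly over $K$.

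For $\mathcal{D}_{\rm BS}$, differentiating the integrand in $u$ yields, up to constants, the factor $x\,\tfrac{TP_0g(x)}{(1+wTP_0g(x))^2}$, using $\mathcal{L}_{|\rho_D|^2}(z)=\tfrac{1}{1+z}$. The pole of this resolvent sits at $wTP_0g(x)=-1$, which is excluded precisely because $w\in K$ lies in the region of convergence; consequently $|1+wTP_0g(x)|$ admits a positive lower bound uniform in $x\ge r$ and $u\in N$. With $g(x)=\beta(1+x)^{-\alpha}$, the derivative is then dominated by $C\,x(1+x)^{-\alpha}$, which is Lebesgue integrable on $[r,\infty)$ exactly because $\alpha>2$. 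This supplies the dominating function required by condition~(3) of Theorem~\ref{thm:conditions}, so differentiation and integration may be interchanged, and $\mathcal{D}_{\rm BS}(s-\imath u)$ is differentiable at $u=0$.

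For $\mathcal{D}_{\rm RIS}$ the argument is simpler, since the integration domain $[R_{\rm in},R_{\rm out}]\times[0,2\pi)$ is compact. Using the explicit expression \eqref{Laplace_transform_rho_R}, the integrand and its $u$-derivative are jointly continuous in $(y,\psi,u)$ on the compact set $[R_{\rm in},R_{\rm out}]\times[0,2\pi]\times N$ — the only potential singularity of $\mathcal{L}_{|\rho_R|^2}$, where its argument reaches $-\tfrac{1}{2M\mathbb{V}[|\zeta|]}$, is again ruled out by $w\in K$ together with the boundedness of $G(r,y,\psi)$ on the ring — so a continuous function on a compact set is bounded and a constant dominating function suffices. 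The main obstacle throughout is the uniform control of these resolvent-type denominators near their poles: everything hinges on keeping $w=s-\imath u$ strictly inside the strip of Lemma~\ref{le:convergence}, which is what makes the pole avoidance — hence both the uniform lower bound on $|1+wTP_0g(x)|$ and the analyticity of $\mathcal{L}_{|\rho_R|^2}$ — possible; the tail integrability for the BS term then follows cleanly from $\alpha>2$.
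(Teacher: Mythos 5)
Your proof is correct, but it takes a genuinely different route from the paper's. You differentiate under the integral sign directly in the definitions \eqref{eq:integral}: for $\mathcal{D}_{\rm BS}$ you compute the $u$-derivative of the integrand, bound the resolvent denominator $|1+wTP_0g(x)|$ away from zero (indeed, since $\Re w=s=\tfrac{1}{P_0g(r)}>0$ here, one even has $|1+wTP_0g(x)|\ge 1$), and obtain the dominating function $C\,x(1+x)^{-\alpha}$ whose integrability rests on $\alpha>2$; for $\mathcal{D}_{\rm RIS}$ you use compactness of the ring plus joint continuity, with the singularity of $\mathcal{L}_{|\rho_R|^2}$ excluded by the convergence condition. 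The paper instead never touches the integrals: it observes that $\mathcal{B}_{I}=\exp(\lambda_{\rm BS}\mathcal{D}_{\rm BS})$, so $\partial_u\mathcal{D}_{\rm BS}$ is the logarithmic derivative $\tfrac{1}{\lambda_{\rm BS}}\,\partial_u\mathcal{B}_{I}/\mathcal{B}_{I}$ as in Eq.~\eqref{eq:differentiability_at_zero}; analyticity of the Laplace transform in its convergence strip (Lemma~\ref{le:convergence}) gives $\partial_u\mathcal{B}_I<\infty$, and positivity of the density at real argument $u=0$ gives $\mathcal{B}_{I}(s)\neq 0$, so the quotient is finite. The paper's argument is shorter and recycles already-established analyticity, but it hinges on the non-vanishing of $\mathcal{B}_{I}$, which is only argued at real $s$ (sufficient for the lemma, which concerns $u=0$). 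Your argument is more self-contained, yields an explicit integral formula and bound for the derivative, and extends without change to a whole neighborhood of $u=0$ rather than just the point itself; the price is the extra domination bookkeeping via Theorem~\ref{thm:conditions}. Both are valid proofs of the stated lemma.
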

\begin{proof}
We prove the differentiability of $\mathcal{D}_{\rm BS}(s-\imath u)$ at $u=0$. 
Given that the Laplace transform $\mathcal{B}_{I}(s-\imath u)$ is analytic, ensuring the infinite differentiability within its region of convergence,  we can express the first order derivative w.r.t. $u=0$ as 
\begin{equation}\label{eq:differentiability_at_zero}
	\frac{\partial \mathcal{B}_{I}(s-\imath u)}{\partial u}    = \mathcal{B}_{I}(s-\imath u)\cdot \frac{\partial \mathcal{D}_{\rm BS}(s-\imath u)}{\partial u} \lambda_{\rm BS}.
\end{equation}
Hence, 
\begin{equation}
 \frac{\partial \mathcal{D}_{\rm BS}(s-\imath u)}{\partial u} = \frac{\partial \mathcal{B}_{I}(s-\imath u)}{\partial u} \cdot \frac{1}{ \mathcal{B}_{I}(s-\imath u) \lambda_{\rm BS}}.
\end{equation}
Eq.~\eqref{eq:differentiability_at_zero} implies that $\mathcal{D}_{\rm BS}(s-\imath u)$ is differentiable whenever $\mathcal{B}_{I}(s-\imath u) \neq 0$.
In other words, the non-vanishing property of $\mathcal{B}_{I}(s-\imath u)$ ensures that $\mathcal{D}_{\rm BS}(s-\imath u)$ is finite.
Since the PDF $f_{I}(t)$ is positive for $t\in \mathbb{R}^{+}$ and $e^{-st}$ is positive for real $s$ within the region of convergence, i.e., $u=0$, it follows that $\mathcal{B}_{I}(s)=\int_{0}^{\infty}f_{I}(t)e^{-st}\text{d}t$ cannot be zero. 
Hence,  $\frac{\partial \mathcal{D}_{\rm BS}(s-\imath u)}{\partial u}$ is finite at $u=0$, proving the differentiability of $\mathcal{D}_{\rm BS}(s-\imath u)$ w.r.t. $u$ at this point. The proof for $\mathcal{D}_{\rm RIS}(s)$ follows similar steps.
\end{proof}

\begin{statement}\label{statement:differentiability}
	There exists an absolute integrable function that dominates $\Big|\frac{\partial}{\partial \lambda}\frac{\Xi_1(s, u, \lambda)}{u}\Big|$. 
\end{statement}
\begin{proof}
The statement is equivalent to prove that $\frac{\partial}{\partial \lambda}\frac{\Xi_1(s, u, \lambda)}{u}$ is absolutely integrable over $u \in (-\infty, \infty)$. We simply repeat the proof in Statement~\ref{state_1}.
We define $\Xi_2(s, u, \lambda)$ as
\begin{equation}\label{eq:partial_BS_derivative}
     \frac{\partial \Xi_1(s, u, \lambda)}{\partial \lambda} =  \mathcal{B}_{\Upsilon}(s-\imath u) \mathcal{D}(s-\imath u) - \mathcal{B}_{\Upsilon}(-\imath u) \mathcal{D}(-\imath u).
\end{equation}
First, we separate:
\begin{equation}
\begin{aligned} \int_{-\infty}^{\infty}\frac{\Xi_2(s, u, \lambda)}{u} {\rm d}u = &\int_{-1}^{1}\frac{\Xi_2(s, u, \lambda)}{u} {\rm d}u +  \\& \left[\int_{-\infty}^{-1} +\int_{1}^{\infty}\frac{\Xi_2(s, u, \lambda)}{u} {\rm d}u\right].  
\end{aligned}
\end{equation}

For the first integral, using the same step in Eq.~\eqref{eq:reduce_singularity_to_zero}, we have
\begin{equation}
    \int_{-1}^{1}\frac{\Xi_2(s, u,\lambda)}{u}{\rm d}u = \int_{\epsilon}^{1}\frac{\Xi_2(s, u,\lambda)-\Xi_2(s, -u, \lambda)}{u}{\rm d}u.
\end{equation}
At the singularity, we have $\lim_{u\rightarrow 0}\frac{\Xi_2(s, u,\lambda)-\Xi_2(s, -u, \lambda)}{u} = 2\cdot \frac{\partial\Xi_2(s, u, \lambda)}{\partial u}$. 
Given the analyticity of $\mathcal{B}_{\Upsilon}(s)$ and the differentiability of $\mathcal{D}(s-\imath u)$ at $u=0$ (Lemma~\ref{lemma:analytic}), $\frac{\partial\Xi_2(s, u, \lambda)}{\partial u}$ is finite at $u=0$ and has a modulus which is bounded above by a constant.
The differentiability implies that the limit of $\frac{\Xi_2(s, u,\lambda)-\Xi_2(s, -u, \lambda)}{u}$ as $u$ approaches the singularity exists and is finite.
Hence, $\frac{\Xi_2(s, u, \lambda)-\Xi_2(s, -u,\lambda)}{u}$ is a continuous and bounded function in the region containing such a removable singularity.  
Consequently, $\frac{\Xi_2(s, u, \lambda)}{u}$ is absolute integrable over $u\in [-1, -\epsilon)\cup (\epsilon,  1]$.

For the second integral, we check the decay behavior of $\Xi_2(s, u, \lambda)$ at $u=\pm\infty$.
By Lemma~\ref{le:convergence}, we assume that the parameters are defined within the region of convergence of $\mathcal{B}_{\Upsilon}(s)$. 
By writing $\mathcal{B}_{\Upsilon}(s-\imath u) = \int_{-\infty}^{\infty}f_\Upsilon(t)e^{-st+\imath ut}{\rm d}t$ and $\mathcal{B}_{\Upsilon}(-\imath u) = \int_{-\infty}^{\infty}f_\Upsilon(t)e^{\imath ut}{\rm d}t$, the Laplace transforms can be understood as the Fourier transform of $f_\Upsilon(t)e^{-st}$ and $f_\Upsilon(t)$, respectively.
We then can apply the same steps used in Statement~\ref{state_1} to demonstrate that both $\mathcal{B}_{\Upsilon}(s-\imath u)$ and $\mathcal{B}_{\Upsilon}(-\imath u)$ decay at a speed bounded above by $O(\frac{1}{u})$ by the continuity and boundedness of $f_\Upsilon(t)$ and $\frac{\partial f_\Upsilon(t)}{\partial t}$ when $s$ defined within the region of convergence $\mathcal{B}_{\Upsilon}(s)$. 
Hence,  $\Xi_2(s, u, \lambda)$ is the product of vanishing functions $\mathcal{B}_{\Upsilon}(s)$ (at a speed at least of $\frac{1}{u}$) and the exponent functions $\mathcal{D}(s)$ when $u\rightarrow \pm \infty$. 
For the discussion of $\mathcal{D}_{\rm BS}(s)$, recall from Eq.~\eqref{hyper_series} that the exponent of the characteristic function $\mathcal{B}_{\rm BS}(\imath u)$, namely, $\mathcal{D}_{\rm BS}(\imath u)$, exhibits a $O(u^{\frac{2}{\alpha}})$ growth. Including the real part of $s$ in the Laplace transform $\mathcal{B}_{\rm BS}(s)$ will modify the constant term within the denominator of Eq.~\eqref{hyper_series}. However, this modification does not alter the overall growth behavior w.r.t. $u$. 
Considering $\Xi_2(s, u, \lambda)$ as the product of $\mathcal{B}_{\Upsilon}(s)$, which decays faster than $O(u^{-1})$, and $\mathcal{D}_{\rm BS}(s)$ with growth at $O(u^{\frac{2}{\alpha}})$ as $u$ approaches $\pm \infty$, we have $\frac{\Xi_2(s, u, \lambda)}{u}$ decays at a speed $u^{-2+\frac{2}{\alpha}}$ with the pathloss exponent $\alpha>2$.
Therefore, we can establish the absolute integrability of $\frac{\Xi_2(s, u, \lambda)}{u}$ over $u\in (-\infty, -1]\cup[1, \infty)$ for the derivative w.r.t. $\lambda_{\rm BS}$.
For the case of $\mathcal{D}_{\rm RIS}(s)$, where the integration region in Eq.~\eqref{eq:laplace_SR} is compact and the integrand given in Eq.~\eqref{Laplace_transform_rho_R} is bounded. Therefore, $\mathcal{D}_{\rm RIS}(s)$ is bounded and $\mathcal{B}_{\Upsilon}(s)$ decays faster than $O(u^{-1})$, we can also establish the absolute integrability of $\frac{\Xi_2(s, u, \lambda)}{u}$ over $u\in (-\infty, -1]\cup[1, \infty)$ for the derivative w.r.t. $\lambda_{\rm RIS}$.
So both parts of the integral fulfill the criteria for absolute integrability. 
\end{proof}
With the above three statements, we have for both $\lambda_{\rm BS}$ and $\lambda_{\rm RIS}$
\begin{equation}\label{eq:first}
\begin{aligned}
	&\frac{\partial}{\partial \lambda}\int_{-\infty}^{\infty}  \Big(\mathcal{B}_{\Upsilon}(s-\imath u)- \mathcal{B}_{\Upsilon}(-\imath u) \Big)\frac{{\rm d} u}{u} \\ &= 	\int_{-\infty}^{\infty} 	\frac{\partial}{\partial \lambda} \Big(\mathcal{B}_{\Upsilon}(s-\imath u)- \mathcal{B}_{\Upsilon}(-\imath u) \Big)\frac{{\rm d} u}{u}.
\end{aligned}
\end{equation}
Recall that  $\mathcal{B}_{\Upsilon^+}(s)=\int_{-\infty}^{\infty}  \Big(\mathcal{B}_{\Upsilon}(s-\imath u)- \mathcal{B}_{\Upsilon}(-\imath u) \Big)\frac{{\rm d} u}{\imath \pi u} +  \frac{1+\mathcal{B}_{\Upsilon}(s)}{2}$, we have shown the differentiability of $\mathcal{B}_{\Upsilon^+}(s)$. 
\begin{remark}\label{remark_converge}
	By the above statements, we proved that Lemma~\ref{le:convergence} implies the integrability of the integrand function with respect $u$, the differentiability of the integrand function with respect to $\lambda_{\rm BS}$ and $\lambda_{\rm RIS}$, and the integrability of the derivatives with respect to $u$. 
In other words, we have $\mathcal{B}_{\Upsilon^+}(s)$, $\frac{\partial \mathcal{B}_{\Upsilon^+}(s)}{\partial \lambda_{\rm BS}}$, and $\frac{\partial \mathcal{B}_{\Upsilon^+}(s)}{\partial \lambda_{\rm RIS}}$ converge to finite values when $s$ is defined within the region of convergence of $\mathcal{B}_{\Upsilon}(s)$. 
\end{remark}
\subsubsection{Interchange Differentiation and Ordinary Integrals}
Let $\Xi_3(\lambda, r) := \mathcal{B}_{\Upsilon^+}(s)2\pi \lambda_{\rm BS} e^{-\pi r^2  \lambda_{\rm BS}}$. 
\begin{statement}
	For every $\lambda$ defined within the region of convergence of the Laplace transforms $\mathcal{B}_{\Upsilon}(s)$, either $\lambda_{\rm BS}$ or $\lambda_{\rm RIS}$, $\Xi_3(\lambda, r)$ is integrable over $r\in [0, \infty)$.  
\end{statement}
\begin{proof}
	For either $\lambda_{\rm BS}$ or $\lambda_{\rm RIS}$, we have $\Xi_3(\lambda, r)$ is continuous and
	\begin{equation}
		\lim_{r\rightarrow \infty}\mathcal{B}_{\Upsilon^+}(s)2\pi \lambda_{\rm BS} e^{-\pi r^2  \lambda_{\rm BS}} = \lim_{r\rightarrow\infty} O(e^{-r^2}) \stackrel{(a)}{=} 0, 
	\end{equation}
where $(a)$ follows from the fact that all $\mathcal{B}_{\Upsilon^+}(s)$ is finite established in Remark~\ref{remark_converge}. 
\end{proof}

\begin{statement}
For all $r\in [R_c, \infty)$, $\Xi_3(\lambda, r)$ is differentiable with respect to either $\lambda_{\rm BS}$ or $\lambda_{\rm RIS}$. 
\end{statement}
\begin{proof}
	The differentiability of $\mathcal{B}_{\Upsilon^+}(s)$ is established in Statement~\ref{statement:differentiability}.
	The differentiability of $2\pi \lambda_{\rm BS} e^{-\pi r^2  \lambda_{\rm BS}}$ is straightforward. 
	As the multiplication of two differentiable functions, $\Xi_3(\lambda, r)$ is differentiable.
\end{proof}

\begin{statement}
	There exist an integrable function $\Xi_4(\lambda, r)$ dominates $\frac{\partial \Xi_3(\lambda, r)}{\partial \lambda}$
\end{statement}
\begin{proof}
We construct a dominate function $\Xi_4(\lambda, r)$ as 
\begin{equation}
\begin{aligned}
\Big|\frac{\partial \Xi_3(\lambda, r)}{\partial \lambda}\Big| =& \Big|\frac{\partial \mathcal{B}_{\Upsilon^+}(s)}{\partial \lambda}  2\pi \lambda_{\rm BS} e^{-\pi r^2  \lambda_{\rm BS}} +\\
& \mathcal{B}_{\Upsilon^+}(s)\cdot  \frac{\partial}{\partial \lambda}(2\pi \lambda_{\rm BS} e^{-\pi r^2  \lambda_{\rm BS}}) \Big| \\
	\leq & \Big|\frac{\partial \mathcal{B}_{\Upsilon^+}(s)}{\partial \lambda} 2\pi \lambda_{\rm BS} e^{-\pi r^2  \lambda_{\rm BS}}\Big| + \\
    &\Big|\mathcal{B}_{\Upsilon^+}(s)\cdot  \frac{\partial}{\partial \lambda}(2\pi \lambda_{\rm BS} e^{-\pi r^2  \lambda_{\rm BS}}) \Big|. 
\end{aligned}
\end{equation}
Following from all $\mathcal{B}_{\Upsilon^+}(s)$ and $\frac{\partial \mathcal{B}_{\Upsilon^+}(s)}{\partial \lambda}$ are finite established in Remark~\ref{remark_converge} when parameters are defined within the region of convergence of $\mathcal{B}_{\Upsilon}(s)$, $\Xi_4(\lambda, r)$ is integrable since it is continuous w.r.t. $r$ and vanishes when $r=\infty$.
This can be shown by the fact that when $r\rightarrow \infty$, the first part decays at a speed of $O(e^{-r^2})$ for either $\lambda_{\rm BS}$ and $\lambda_{\rm RIS}$, and the second part is zero for $\lambda_{\rm RIS}$ or decays at the same speed for $\lambda_{\rm BS}$.
\end{proof}

\begin{remark}
	Now we established the integrability $\int_{R_c}^{\infty} \Xi_3(\lambda, r){\rm d}r<\infty$, the differentiability $\frac{\partial \Xi_3(\lambda, r)}{\partial \lambda_{\rm BS}}<\infty$ and $\frac{\partial \Xi_3(\lambda, r)}{\partial \lambda_{\rm RIS}}<\infty$, and the integrability of the derivatives  $\int_{R_c}^{\infty}\frac{\partial \Xi_3(\lambda, r)}{\partial \lambda_{\rm BS}}{\rm d}r<\infty$ and $\int_{R_c}^{\infty}\frac{\partial \Xi_3(\lambda, r)}{\partial \lambda_{\rm RIS}}{\rm d}r<\infty$ , when $s$ is defined within the region of convergence of $\mathcal{B}_{\Upsilon}(s)$
\end{remark}
Thus, we have
\begin{equation}\label{eq:second}
\begin{aligned}
&\frac{\partial}{\partial\lambda}\int_{R_c}^{\infty} \mathcal{B}_{\Upsilon^+}(s) 2\pi \lambda_{\rm BS} r e^{-\pi r^2\lambda_{\rm BS}} {\rm d}r \\
&=\int_{R_c}^{\infty}\frac{\partial}{\partial\lambda} \left( \mathcal{B}_{\Upsilon^+}(s) 2\pi \lambda_{\rm BS} r e^{-\pi r^2\lambda_{\rm BS}}\right) {\rm d}r. 
\end{aligned}
\end{equation}
When $\lambda$ specifically refers to $\lambda_{\rm BS}$, we further have 
\begin{equation}
\begin{aligned}
&\frac{\partial}{\partial\lambda_{\rm BS}} \left( \mathcal{B}_{\Upsilon^+}(s) 2\pi \lambda_{\rm BS} r e^{-\pi r^2\lambda_{\rm BS}}\right) = \bigg(\frac{\partial\mathcal{B}_{\Upsilon^+}(s)}{\partial\lambda_{\rm BS}}   \lambda_{\rm BS}  \\
&+\mathcal{B}_{\Upsilon^+}(s) \left(  1-\pi r^2 \lambda_{\rm BS}\right)\bigg) 2\pi re^{-\pi r^2\lambda_{\rm BS}}. 
\end{aligned}
\end{equation}
\begin{statement}
	Let $\Xi_5(\lambda, t) := \int_{0}^\infty \Xi_3(\lambda, r){\rm d}r \frac{1}{1+t}$, the conditions given in Theorem~\ref{thm:conditions} are satisfied.
\end{statement}
\begin{proof}
	By replacing $\mathcal{B}_{\Upsilon^+}(s)$ with $\int_{0}^\infty \Xi_3(\lambda, r){\rm d}r$ and replacing $2\pi \lambda_{\rm BS} e^{-\pi r^2  \lambda_{\rm BS}}$ with $\frac{1}{t+1}$, the verification similar to the last three statements is straightforward.
\end{proof}
Then, we have 
\begin{equation}\label{eq:third}
	\frac{\partial}{\partial\lambda}\int_{0}^\infty \Xi_5(\lambda, t)  {\rm d}t = \int_{0}^{\infty} \frac{\partial}{\partial\lambda} \Xi_5(\lambda, t)  {\rm d}t. 
\end{equation}

Combine Eq.~\eqref{eq:first}, Eq.~\eqref{eq:second} and Eq.~\eqref{eq:third}, we achieve the objective:
\begin{equation}
\begin{aligned}
	\frac{\partial \tau}{\partial\lambda_{\rm BS}} =& \frac{\partial}{\partial\lambda_{\rm BS}}\int_{0}^{\infty} \int_{R_c}^{\infty} \mathcal{B}_{\Upsilon^+}(s) 2\pi \lambda_{\rm BS} r e^{-\pi r^2\lambda_{\rm BS}} \frac{1}{1+t} {\rm d}r {\rm d}t\\
	=&  \int_{0}^{\infty} \int_{R_c}^{\infty}\bigg(  \frac{\partial \mathcal{B}_{\Upsilon^+}(s)}{\partial\lambda_{\rm BS}}  \lambda_{\rm BS}   + \mathcal{B}_{\Upsilon^+}(s) \big( 1 -\pi r^2 \lambda_{\rm BS}\big) \bigg)\\
	&    2\pi r e^{-\pi r^2\lambda_{\rm BS}} \frac{1}{1+t} {\rm d}r {\rm d}t, 
\end{aligned}
\end{equation}
and
\begin{equation}
\begin{aligned}
	\frac{\partial \tau}{\partial\lambda_{\rm RIS}} =& \frac{\partial}{\partial\lambda_{\rm RIS}}\int_{0}^{\infty} \int_{R_c}^{\infty} \mathcal{B}_{\Upsilon^+}(s) 2\pi \lambda_{\rm BS} r e^{-\pi r^2\lambda_{\rm BS}} \frac{1}{1+t} {\rm d}r {\rm d}t\\
	=&  \int_{0}^{\infty} \int_{R_c}^{\infty}  \frac{\partial \mathcal{B}_{\Upsilon^+}(s)}{\partial\lambda_{\rm RIS}} 2\pi \lambda_{\rm BS} r e^{-\pi r^2\lambda_{\rm BS}} \frac{1}{1+t} {\rm d}r {\rm d}t. 
\end{aligned}
\end{equation}

\subsection{Application of the Leibniz rule for Coverage Hole Mitigation}\label{app:leibniz_coverage_hole}

To obtain the derivatives for the coverage hole mitigation scenario, minor modifications to the derivative expressions used for throughput enhancement are required. This is primarily due to the introduction of the exponent's derivative in Eq.~\eqref{eq:derivaive_I}, necessitating the application of the Leibniz rule in interchanging the differentiation and integration in Eq.~\eqref{partial_D_I} and Eq.~\eqref{eq:Derivative_S_R_BS}. 

For applying the Leibniz rule in Eq.~\eqref{partial_D_I} for $\mathcal{D}_{I}(sT)=-2\pi \int_{r_{\rm H}}^{\infty} \frac{xsTP_0g(x)}{1+sTP_0g(x)}{\rm d}x$, it is necessary to demonstrate the continuity and integrability of both the integrand $\frac{xsTP_0g(x)}{1+sTP_0g(x)}$ and its derivative within the interval $[r_{\rm H}, \infty)$. 
The continuity condition is satisfied since the pathloss function, $g(x)=\beta(x+1)^{-\alpha}$, is a polynomial without singularities. Consequently, both the integrand and its derivative, which are expressed as fractions of polynomials without singularities, are also continuous.
The integrability of the integrand, $\frac{xsTP_0\beta(x+1)^{-\alpha}}{1+sTP_0\beta(x+1)^{-\alpha}}$, is ensured by the assumption that the pathloss exponent $\alpha>2$ because the numerator decays faster than $1/x$ due to this pathloss model, and the denominator is always greater than one, ensuring convergence of the integral. This condition also guarantees the integrability of the integrand's derivative, given in Eq.~\eqref{partial_D_I}. 

Since the integration in Eq.~\eqref{eq:Derivative_S_R_BS} is performed over a compact region, interchanging differentiation and integration can be justified when the integrand and its derivative are continuous. This condition is guaranteed given that Eq.~\eqref{eq:derivative_S_R_additional} involve elementary polynomial and exponential functions that are continuous. The only potential singularity arises from the term $1-2 s P_0G(r_{\rm H},y,\psi)M\mathbb{V}[|\zeta|]$. However, $ 1-2 s P_0G(r_{\rm H},y,\psi)M\mathbb{V}[|\zeta|]> 0$ is true due to the condition that the Laplace transform in Eq.~\eqref{Laplace_transform_rho_R} converges.

\end{document}